\newtheorem{theorem}{Theorem}
\theoremstyle{definition}
\newtheorem{definition}{Definition}
\newtheorem{lemma}{Lemma}
\newtheorem{corollary}{Corollary}
\newtheorem{assumption}{Assumption}
\newtheorem{remark}{Remark}
\newcommand{\R}{{\mathbb R}}
\newcommand{\E}{{\mathbb E}}
\newcommand{\Py}{{\mathbb P}}
\newcommand{\vct}[1]{\bm{#1}}
\newcommand{\rv}[1]{{{#1}}}
\DeclareMathOperator*{\argmax}{arg\,max}
\DeclareMathOperator*{\argmin}{arg\,min}
\begin{document}
\title{Active Hypothesis Testing: Beyond Chernoff-Stein} 


\author{%
  \IEEEauthorblockN{Dhruva Kartik, Ashutosh Nayyar and Urbashi Mitra}
  \IEEEauthorblockA{Ming Hsieh Department of Electrical Engineering\\
                    University of Southern California, Los Angeles, CA, USA\\ 
                    Email: \{mokhasun, ashutosh.nayyar, ubli\}@usc.edu}
}


\maketitle

\begin{abstract}
   An active hypothesis testing problem is formulated. In this problem, the agent can perform a fixed number of experiments and then decide on one of the hypotheses. The agent is also allowed to declare its experiments inconclusive if needed. The objective is to minimize the probability of making an incorrect inference (misclassification probability) while ensuring that the true hypothesis is declared conclusively with moderately high probability. For this problem, lower and upper bounds on the optimal misclassification probability are derived and these bounds are shown to be asymptotically tight. In the analysis, a sub-problem, which can be viewed as a generalization of the Chernoff-Stein lemma, is formulated and analyzed. A heuristic approach to strategy design is proposed and its relationship with existing heuristic strategies is discussed.
\end{abstract}


\section{Introduction}
We frequently encounter scenarios wherein we would like to deduce whether one of several hypotheses is true by gathering data or evidence. This problem is referred to as multi-hypothesis testing. If we have access to multiple candidate experiments or data sources, we can adaptively select more informative experiments to infer the true hypothesis. This leads to a joint control and inference problem commonly referred to as \emph{active} hypothesis testing. There are numerous ways of formulating this problem and the precise mathematical formulation depends on the target application.

In this paper, we consider a scenario in which there is an agent that can perform a \emph{fixed} number of experiments. Subsequently, the agent can decide on one of the hypotheses using the collected data. The agent is also allowed to declare the experiments \emph{inconclusive} if needed. The objective is to minimize the probability of making an
\emph{incorrect} inference (misclassification probability) while ensuring
that the true hypothesis is declared conclusively with moderately
high probability. This formulation is of particular interest when the agent is time-constrained and the penalty for making an incorrect inference is significantly higher than the penalty for making no decision. In such cases, it is reasonable for the agent to abstain from drawing conclusions unless there is strong evidence supporting one of the hypotheses. 

{For example, consider a decentralized system in which an agent needs to perform experiments and convey its results to another decision-maker (such as a fusion center). Due to communication constraints, the agent can only communicate its \emph{estimate} of the hypothesis or remain silent. The agent incurs heavy penalty for transmitting an incorrect hypothesis. However, the agent is also constrained to transmit the true hypothesis with moderately high probability. Thus, we would like to design an experiment selection strategy and an inference (transmission) strategy for the agent which minimize the misclassification probability while ensuring that the correct estimate is transmitted with sufficiently high probability.}

Our contributions in this paper can be summarized as follows. We find lower and upper bounds on the optimal misclassification probabilities in our constrained problem. These bounds are asymptotically tight under some mild assumptions. In our analysis, we formulate a \emph{sub-problem} and use the results from the sub-problem to solve our original problem. This sub-problem can be viewed as a generalization of the Chernoff-Stein lemma \cite{cover2012elements} to a setting with multiple hypotheses and multiple experiments. Thereby, we describe an alternate approach to finding a lower bound on the optimal error probability in the Chernoff-Stein lemma. Further, we show that the experiment selection strategy described in \cite{chernoff1959sequential, nitinawarat2013controlled} in a \emph{sequential} setting is asymptotically optimal for our fixed horizon problem. We also describe an alternate heuristic approach to strategy design that might improve the performance in the non-asymptotic regime. This approach is based on, what we call, the \emph{expected confidence rate} which naturally arises out of our analysis.

The rest of the paper is organized as follows. In Section \ref{priorwork}, we summarize key prior literature on hypothesis testing and discuss how our problem is related to various other formulations. In Section \ref{notation}, we describe our notation and in Section \ref{formulation}, we formulate our problem. We state the main results in Section \ref{main} and sketch the proof of our results in Section \ref{proofsec}. In Section \ref{discussion}, we discuss some heuristic approaches for strategy design. We conclude the paper in Section \ref{conc}.

\subsection{Prior Work}\label{priorwork}
Hypothesis testing is a long-standing problem and has been addressed in various settings. The classical formulations have been described in \cite{cover2012elements},\cite{chernoff1959sequential},\cite{wald1973sequential}. More recently, active hypothesis testing has been addressed in \cite{nitinawarat2013controlled},\cite{naghshvar2013active}.
The key difference between our formulation and the fixed horizon formulations in \cite{cover2012elements}, \cite{nitinawarat2013controlled} is that unlike our agent, the agents in these works are compelled to decide on a hypothesis after performing all the experiments. Our analysis shows that this modification significantly alters the optimal error exponents and the strategy design for inference and experiment selection. Another common formulation is the \emph{sequential} setting in which the agent can perform experiments until sufficiently strong evidence is gathered \cite{chernoff1959sequential}, \cite{nitinawarat2013controlled}, \cite{naghshvar2013active}. The objective in the sequential setting is to minimize a combination of Bayesian error probability and expected stopping time. Interestingly, the analysis and results in our fixed horizon problem have a strong overlap with those in the sequential setting. As mentioned earlier, a sub-problem in our analysis is a generalization of the Chernoff-Stein lemma \cite{cover2012elements} and our original problem can be seen as a symmetric version of this lemma. To the best of our knowledge, our formulation has not been considered before. The analysis involved in obtaining the upper bound for our problem borrows from prior works \cite{chernoff1959sequential},\cite{nitinawarat2013controlled}. However, our approach for obtaining lower bounds is different from the approach used in all the aforementioned works.

\subsection{Notation}\label{notation}
Random variables are denoted by upper case letters, their realization by the corresponding lower case letter. We use calligraphic fonts to denote sets (e.g. $\mathcal{U}$) and $\Delta \mathcal{U}$ is the probability simplex over a finite set $\mathcal{U}$. In general, subscripts denote time index unless stated otherwise. For time indices $n_1\leq n_2$, $\rv{Y}_{n_1:n_2}$ is the short hand notation for the variables $(\rv{Y}_{n_1},\rv{Y}_{n_1+1},...,\rv{Y}_{n_2})$.
For a strategy $g$, we use $\Py^g[\cdot]$ and $\E^g[\cdot]$ to indicate that the probability and expectation depend on the choice of $g$. For an hypothesis $i$, $\E_i^g[\cdot]$ denotes the expectation conditioned on hypothesis $i$. The Kullback-Leibler divergence between distributions $p$ and $q$ over a finite space $\mathcal{Y}$ is given by
\begin{equation}
D(p || q) = \sum_{y \in \mathcal{Y}}p(y)\log\frac{p(y)}{q(y)}.
\end{equation}

\section{Problem Formulation}\label{formulation}
Let $\mathcal{H} = \{1,2,\ldots,M\}$ be a finite set of hypotheses and let the random variable $\rv{H}$ denote the true hypothesis. The prior probability on $\rv{H}$ is $\vct{\rho}_1$. At each time $n=1,2,\ldots$, an agent can perform an experiment $\rv{U }_n \in \mathcal{U}$ and obtain an observation $\rv{Y}_n \in \mathcal{Y}$. We assume that the sets $\mathcal{U}$ and $\mathcal{Y}$ are finite.  The observation $\rv{Y}_n$ at time $n$ is given by
\begin{equation}
\rv{Y}_n = \xi(\rv{H}, \rv{U}_n,\rv{W}_n).
\end{equation}
where $\{\rv{W}_n:  n = 1,2,\dots\}$ is a collection of mutually independent and identically distributed primitive random variables. The probability of observing  $y$ after performing an experiment $u$ under hypothesis $h$ is denoted by $p_h^u(y)$, that is,
\[ p_h^u(y) := \Py(Y_n=y \mid H=h,U_n=u).  \]
 The time horizon, that is the total number of experiments performed, is fixed \emph{a priori} to $N < \infty$.

At time $n=1,2,\ldots$, the information available to the agent, denoted by $\rv{I}_n$, is the collection of all experiments performed and the corresponding observations up to time $n-1$, \emph{i.e}.
\begin{equation}
\rv{I}_n = \{\rv{U}_{1:n-1},\rv{Y}_{1:n-1}\}.
\end{equation}
At time $n$, the agent selects a distribution over the set of actions $\mathcal{U}$ according to an \emph{experiment selection rule} $g_n$ and the action $\rv{U}_n$ is randomly drawn from this distribution,
that is
\begin{equation}
\rv{U}_n  \sim g_n(\rv{I}_n).
\end{equation}
The sequence  $\{g_n, n=1,\ldots,N\}$ is denoted by $g$ and  referred to as the \emph{experiment selection strategy}. Let the collection of all such strategies be $\mathcal{G}$. 

After performing $N$ experiments, the agent can declare one of the hypotheses  to be true or it can declare that its experiments were inconclusive. We refer to this final declaration as the  agent's \emph{inference decision} and denote it by  $\hat{\rv{H}}_N$.  The inference decision can take values in $\mathcal{H} \cup \{\varnothing\}$, where $\varnothing$ denotes the inconclusive declaration. $\hat{\rv{H}}_N$ is chosen according to an \emph{inference strategy} $f$,  \emph{i.e}.
\begin{equation}
    \hat{\rv{H}}_{N} = f(\rv{I}_{N+1}).
\end{equation}
Let the set of all inference strategies be $\mathcal{F}$. 

For an experiment selection strategy $g$ and an inference strategy $f$, we define the following error probabilities.
\begin{definition}
Let $\psi_N(i)$ be the probability that the agent does not infer $i$ when the true hypothesis is indeed $i$, \emph{i.e}.
\begin{align}
    \psi_N(i) &:= \Py^{f,g}[\hat{\rv{H}}_{N} \neq i \mid \rv{H} = i].\\
    \shortintertext{We refer to $\psi_N(i)$ as type-$i$ error probability. Let $\phi_N(i)$ be the probability that the agent infers $i$ but the true hypothesis is not $i$, \emph{i.e}.}
    \phi_N(i) &:= \Py^{f,g}[\hat{\rv{H}}_{N} = i \mid \rv{H} \neq i].
\end{align}
\end{definition}
\begin{remark}
Note that when there are only two hypotheses and the agent is forced to decide on one of the two hypotheses, $\psi_N(1)$ and $\psi_N(2)$ are type I and type II errors, respectively. In this case, $\psi_N(1) = \phi_N(2)$ and $\psi_N(2) = \phi_N(1)$.
\end{remark}

In this paper, we will be interested in the event that the agent declares an incorrect hypothesis to be true.  That is, we will consider the event $\cup_{i \in \mathcal{H}} \{\hat{\rv{H}}_{N} = i, \rv{H} \neq i \}$. We refer to this event as the \emph{misclassification event}. Let $\gamma_N$ be the probability of this event. Using the definitions above, this probability can be written as
\begin{align}
    \gamma_N &=  \sum_{i \in \mathcal{H}}\Py^{f,g}[\hat{\rv{H}}_{N} = i \mid \rv{H} \neq i]\Py[\rv{H} \neq i]\\
    &= \sum_{i \in \mathcal{H}}\phi_N(i)(1-\rho_1(i)).
\end{align}
We will consider the problem of designing the experiment selection and inference strategies to minimize $\gamma_N$ (the probability of declaring an incorrect hypothesis) while satisfying constraints on the type-$i$ error probabilities. That is, we are interested in the following optimization problem:
\begin{align}\tag{P1}\label{opt1}
    & & & \underset{f \in \mathcal{F},g \in \mathcal{G}}{\text{min}} & & \gamma_N & \\
    \nonumber& & & \text{subject to} & & \psi_N(i) \leq \epsilon_N,\; \forall i \in \mathcal{H} & 
\end{align}
where $0 < \epsilon_N < 1$. Let $\gamma^*_N$ denote the infimum value of this optimization problem. We define $\gamma^*_N := \infty$ if the optimization problem is infeasible.

The above  formulation is intended for scenarios where the penalty for declaring an incorrect hypothesis to be true is much higher than the penalty for making no decision about the hypothesis. In such cases, it is reasonable for the agent to abstain from drawing conclusions when the evidence is not strong enough. The constraints on type-$i$ error probabilities ensure that  the agent does not abstain from drawing conclusions too often. The optimization problem seeks to minimize the probability of declaring an incorrect hypothesis while satisfying the type-$i$ error probability constraints.

\section{Main Results}\label{main}
In this section, we will describe asymptotically tight lower and upper bounds on the optimal error probability $\gamma_N^*$ in Problem (\ref{opt1}). We will first define some useful quantities and then state the assumptions we make to prove our results.

The \emph{posterior belief}  $\vct{\rho}_n$  on the hypothesis $\rv{H}$   based on information $\rv{I}_n$ is given by
\begin{equation}\label{postbelief}
\rho_n(i) = \Py[\rv{H} = i \mid \rv{U}_{1:n-1},\rv{Y}_{1:n-1}] =\Py[\rv{H} = i \mid \rv{I}_n].
\end{equation}
{Note that given a realization of the experiments and observations until time $n$,  the posterior belief does not depend on the experiment selection strategy $g$.}

\begin{definition}[\emph{Bayesian Log-Likelihood Ratio \& Expected Confidence Rate}]
The \emph{Bayesian log-likelihood ratio} $\mathcal{C}_i(\vct{\rho})$ associated with an hypothesis $i \in \mathcal{H}$ is defined as
\begin{equation}
\mathcal{C}_i(\vct{\rho}) := \log\frac{\rho(i)}{1-\rho(i)}.\\
\end{equation}
The Bayesian log-likelihood ratio (BLLR) is the logarithm of the ratio of the probability that hypothesis $i$ is true versus the probability that hypothesis $i$ is not true. The BLLR can be interpreted as a \emph{confidence level} on hypothesis $i$. For a hypothesis $i$ and a strategy $g \in \mathcal{G}$, we define the expected confidence rate $J_N^g(i)$ as
\begin{align}\label{eq:jng}
&J_N^g(i) := \frac{1}{N} \E_i^{g} \left[\mathcal{C}_{i}(\vct{\rho}_{N+1})- \mathcal{C}_{i}(\vct{\rho}_1) \right].
\end{align}
\end{definition}

\begin{assumption}[\emph{Full support}]\label{boundedassump}
There exists a constant $B>0$ such that $|\lambda_j^i(u,y)| < B$ for every experiment $u \in \mathcal{U}$, observation $y \in \mathcal{Y}$ and pair of hypotheses $i,j \in \mathcal{H}$, where
$$
\lambda_j^i(u,y) := \log\frac{p_i^u(y)}{p_j^u(y)}.
$$
\end{assumption}

\begin{assumption}\label{steadinf}
For each experiment $u \in \mathcal{U}$ and any pair of hypotheses $i,j \in \mathcal{H}$ such that $i \neq j$, we have
\begin{align}
D(p_i^u || p_j^u) > 0.
\end{align}
\end{assumption}
\begin{remark}
We make Assumption \ref{steadinf} for ease of exposition. Techniques for relaxing this Assumption have been discussed in \cite{nitinawarat2013controlled} and \cite{naghshvar2013active}.
\end{remark}
\noindent  For each hypothesis $i \in \mathcal{H}$, define
\begin{align}\label{eq:dstarti}
D^*(i) &:= \max_{\vct{\alpha} \in \Delta\mathcal{U}} \min_{j\neq i} \sum_{u}\alpha(u) D(p_i^u || p_j^u) \\
&=  \min_{\vct{\beta} \in \Delta\tilde{\mathcal{H}}_i} \max_{u \in \mathcal{U}} \sum_{j \neq i}\beta(j) D(p_i^u || p_j^u),
\end{align}
where $\tilde{\mathcal{H}}_i = \mathcal{H} \setminus \{i\}$. The equality of the min-max and max-min values follows from the minimax theorem \cite{osborne1994course} because the sets $\mathcal{U}$ and $\mathcal{H}$ are finite and the Kullback-Leibler divergences are bounded by $B$.

\begin{assumption}\label{epsassum}
For every $N \geq 1$, we have that the bound on the type-$i$ error satisfies $0 <\epsilon_N \leq 1/2N$. Further,
\begin{align}
    \lim_{N \to \infty}\frac{-\log{\epsilon_N}}{N} = 0.
\end{align}
\end{assumption}

\begin{theorem}[Lower bound]\label{lbthm} 
There exists a positive constant $K_1$ that does not depend on $N$ such that for every $N \geq 1$ the following statements are true.
\begin{enumerate}[label=\alph*)]
\item For any experiment selection strategy $g$ and inference strategy $f$ that satisfy the constraints $\psi_N(i) \leq \epsilon_N$ for every $i \in \mathcal{H}$, we have the lower bound
\begin{align}
   \gamma_N &\geq \sum_{i \in \mathcal{H}}(1-\rho_1(i))\exp(-NJ^g_N(i) -  K_1),
\end{align}
where $J^g_N(i)$ is given by \eqref{eq:jng}. 
\item The optimal misclassification probability $\gamma_N^*$ in Problem (\ref{opt1}) satisfies
\begin{align}
   \gamma^*_N &\geq \sum_{i \in \mathcal{H}}(1-\rho_1(i))\exp(-ND^*(i) -  K_1),
\end{align}
where $D^*(i)$ is given by \eqref{eq:dstarti}.
\end{enumerate}
\end{theorem}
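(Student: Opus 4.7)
The strategy is to obtain (a) via a single application of the data-processing inequality (DPI) and then deduce (b) by bounding $NJ_N^g(i) \le ND^*(i) + O(1)$ using a change-of-measure argument. The starting point for (a) is the observation that, by Bayes' rule, $\exp(\mathcal{C}_i(\vct{\rho}_{N+1}) - \mathcal{C}_i(\vct{\rho}_1))$ coincides with the likelihood ratio $dP_i^g/dP_{\neq i}^g$, where $P_i^g$ and $P_{\neq i}^g$ denote the joint distributions of $\rv{I}_{N+1}$ under $\rv{H}=i$ and $\rv{H}\ne i$ respectively, for the fixed strategy $g$. Hence $NJ_N^g(i) = D(P_i^g \,\|\, P_{\neq i}^g)$. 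Applying the DPI to the binary statistic $\mathbf{1}\{f(\rv{I}_{N+1}) = i\}$ yields $d(1-\psi_N(i) \,\|\, \phi_N(i)) \le NJ_N^g(i)$, where $d$ denotes the binary KL divergence. Inverting this estimate via the elementary bound $d(a\|b) \ge a\log(1/b) - \log 2$ gives $\log(1/\phi_N(i)) \le (NJ_N^g(i) + \log 2)/(1-\psi_N(i))$. Assumption \ref{epsassum} ensures $\psi_N(i) \le 1/(2N)$, while Assumption \ref{boundedassump} implies $|\mathcal{C}_i(\vct{\rho}_{n+1}) - \mathcal{C}_i(\vct{\rho}_n)| \le B$ and therefore $J_N^g(i) \le B$. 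These two facts allow the multiplicative factor $1/(1-\psi_N(i))$ to be absorbed into an $N$-independent additive constant $K_1$, yielding $\phi_N(i) \ge \exp(-NJ_N^g(i) - K_1)$; summing against $1-\rho_1(i)$ proves (a).

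For (b), I would upper bound $D(P_i^g \,\|\, P_{\neq i}^g)$ by $ND^*(i) + O(1)$. Let $\vct{\beta}^* \in \Delta\tilde{\mathcal{H}}_i$ attain the minimum in the minmax expression for $D^*(i)$, and define the mixture $R_i := \sum_{j \ne i}\beta^*(j) P_j^g$. Since $P_{\neq i}^g = \sum_{j \ne i}\beta_1(j) P_j^g$, where $\beta_1(j) := \rho_1(j)/(1-\rho_1(i))$, one has $R_i \le C_i P_{\neq i}^g$ pointwise with $C_i := \max_{j \ne i}\beta^*(j)/\beta_1(j)$ (finite whenever $\beta_1$ has full support on $\tilde{\mathcal{H}}_i$). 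This pointwise bound yields $NJ_N^g(i) \le \log C_i + D(P_i^g \,\|\, R_i)$. Convexity of KL in its second argument gives $D(P_i^g \,\|\, R_i) \le \sum_{j \ne i}\beta^*(j) D(P_i^g \,\|\, P_j^g)$; the KL chain rule supplies $D(P_i^g \,\|\, P_j^g) = \E_i^g[\sum_n D(p_i^{U_n}\|p_j^{U_n})]$; and the defining property $\sum_{j \ne i}\beta^*(j) D(p_i^u \,\|\, p_j^u) \le D^*(i)$ for every $u$ forces $D(P_i^g \,\|\, R_i) \le ND^*(i)$. Substituting into the bound from (a) yields (b) with $K_1$ replaced by $K_1 + \max_i \log C_i$.

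The main technical obstacle lies in the cleanup at the end of the proof of (a). A careless inversion of the binary-KL bound produces only the multiplicative estimate $\log(1/\phi_N(i)) \le (1 + \psi_N(i)/(1-\psi_N(i)))(NJ_N^g(i) + \log 2)$, which by itself yields asymptotic rather than additive-$K_1$ tightness of the exponent. The quantitative decay $\epsilon_N \le 1/(2N)$ prescribed by Assumption \ref{epsassum}, together with the uniform bound $J_N^g(i) \le B$ coming from Assumption \ref{boundedassump}, is exactly the pair of ingredients needed to keep the correction additive in $N$. A minor secondary subtlety in (b) is ensuring $C_i$ is finite, which can be arranged by restricting attention to $\{j : \rho_1(j) > 0\}$ if the prior lacks full support.
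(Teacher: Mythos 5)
Your proof is correct, but it reaches the two key estimates of the paper (Lemmas \ref{cslb} and \ref{kldbound}) by a genuinely different route. For part (a), the paper does not invoke the data-processing inequality: it works directly on the acceptance region $Z_{N+1}=\{\iota_{N+1}: f(\iota_{N+1})=i\}$, writes $\Py^{f,g}[\hat{\rv{H}}_N=i,\rv{H}\neq i]$ as an expectation of $\exp(-\mathcal{C}_i(\rv{I}_{N+1}))$ under the hypothesis-$i$ measure restricted to $Z_{N+1}$, applies Jensen's inequality to $-\log$, and then converts the conditional expectation $\E_i^g[\mathcal{C}_i(\rv{I}_{N+1})\mid \rv{I}_{N+1}\in Z_{N+1}]$ back into the unconditional quantity $NJ_N^g(i)+\mathcal{C}_i(\vct{\rho}_1)$ using $\psi_N(i)\le\epsilon_N$ and the bound $|\mathcal{C}_i(\vct{\rho}_{N+1})-\mathcal{C}_i(\vct{\rho}_1)|<NB$ from Corollary \ref{boundedcoro}. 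Your version --- identifying $NJ_N^g(i)$ with $D(P_i^g\,\|\,P_{\neq i}^g)$, applying the DPI to the binary statistic $\mathbf{1}\{f(\rv{I}_{N+1})=i\}$, and inverting via $d(a\|b)\ge a\log(1/b)-\log 2$ --- is the standard weak-converse packaging of the same change of measure; it is shorter, avoids the explicit manipulation of conditional expectations over the decision region, and makes transparent that the constraint enters only through the factor $1/(1-\psi_N(i))$. You also correctly isolated the two ingredients ($\epsilon_N\le 1/2N$ and $J_N^g(i)\le B$) needed to keep the correction additive in $N$; these play exactly the same role in the paper's passage from Lemma \ref{cslb} to Lemma \ref{cslb2}. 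For part (b), the paper's Lemma \ref{kldbound} bounds $\mathcal{C}_i(\vct{\rho}_{N+1})-\mathcal{C}_i(\vct{\rho}_1)$ pointwise by $-\log\tilde{\rho}_1(j)+\sum_n\lambda_j^i(\rv{U}_n,\rv{Y}_n)$ for each $j$, averages against $\vct{\beta}^{i*}$, and takes expectations via Lemma \ref{condexp}; your mixture argument ($R_i\le C_i P_{\neq i}^g$ pointwise, convexity of KL in its second argument, and the chain rule) is the same computation lifted to the level of path measures, and it yields the marginally weaker additive constant $\log\max_j \beta^{i*}(j)/\tilde{\rho}_1(j)$ in place of $-\sum_j\beta^{i*}(j)\log\tilde{\rho}_1(j)$ --- both $N$-independent, so the theorem is unaffected. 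The full-support caveat on $\vct{\rho}_1$ that you flag is implicitly present in the paper's constant as well.
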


\begin{theorem}[Upper bound]\label{achieve}
For any $\delta > 0$, there exists an integer $N_\delta$ such that for every $N\geq N_\delta$, we have
\begin{align}
    \gamma^*_N \leq \sum_{i \in \mathcal{H}}(1-\rho_1(i))\exp(-N(D^*(i) -\delta)).
\end{align}
\end{theorem}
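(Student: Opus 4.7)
The plan is to construct one strategy pair $(f^*, g^*)$ that is feasible for Problem~\eqref{opt1} and whose misclassification probability already matches the claimed bound. For $g^*$ I would use the Chernoff-type rule from \cite{chernoff1959sequential, nitinawarat2013controlled}: at each time $n$, compute the posterior MAP estimate $\hat{i}_n := \argmax_i \rho_n(i)$ and draw $\rv{U}_n$ from the distribution $\vct{\alpha}^*_{\hat{i}_n} \in \Delta\mathcal{U}$ attaining the outer maximum in \eqref{eq:dstarti} with $i = \hat{i}_n$. For $f^*$, fix thresholds $\eta_N^i := (1 - \rho_1(i))\exp(-N(D^*(i) - \delta))$ and declare $\hat{\rv{H}}_N = i$ whenever $\rho_{N+1}(i) \geq 1 - \eta_N^i$, and $\varnothing$ otherwise. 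For $N$ large the thresholds $1-\eta_N^i$ exceed $1/2$, so at most one hypothesis can cross and the rule is well-defined.

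The misclassification bound is then a one-line posterior-conditioning computation. Since $\hat{\rv{H}}_N$ is a function of $\rv{I}_{N+1}$ and $\Py[\rv{H} \neq i \mid \rv{I}_{N+1}] = 1 - \rho_{N+1}(i)$,
\begin{align*}
\Py^{f^*, g^*}[\hat{\rv{H}}_N = i,\, \rv{H} \neq i] = \E^{f^*, g^*}\bigl[\mathbf{1}_{\{\hat{\rv{H}}_N = i\}}(1 - \rho_{N+1}(i))\bigr] \leq \eta_N^i,
\end{align*}
because $1 - \rho_{N+1}(i) \leq \eta_N^i$ on $\{\hat{\rv{H}}_N = i\}$. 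Summing over $i$ gives $\gamma_N \leq \sum_i \eta_N^i$, which is exactly the desired bound.

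The principal technical obstacle is verifying feasibility, i.e., $\psi_N(i) \leq \epsilon_N$ for every $i$. Note that $\{\hat{\rv{H}}_N \neq i\}$ coincides with $\{\mathcal{C}_i(\vct{\rho}_{N+1}) < \log((1 - \eta_N^i)/\eta_N^i)\}$, and for $N$ large the right-hand threshold is at most $N(D^*(i) - \delta/2)$. So it suffices to prove an exponential-in-$N$ tail bound of the form $\Py^{g^*}[\mathcal{C}_i(\vct{\rho}_{N+1}) < N(D^*(i) - \delta/2) \mid \rv{H} = i] \leq e^{-cN}$ for some $c(\delta) > 0$; once this is in place, Assumption~\ref{epsassum} (which requires only sub-exponential decay of $\epsilon_N$) guarantees that the bound dominates $\epsilon_N$ for all $N \geq N_\delta$. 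I would obtain the tail bound by the now-standard route: first lower-bound $\mathcal{C}_i(\vct{\rho}_{N+1})$ by $T_N := \min_{j \neq i}\sum_{n=1}^{N} \lambda_j^i(\rv{U}_n, \rv{Y}_n)$ up to an $O(1)$ prior-dependent constant, then decompose $T_N$ into its conditional-mean trajectory plus a martingale whose increments are bounded by Assumption~\ref{boundedassump}, and apply Azuma--Hoeffding. The real subtlety is the burn-in period during which $\hat{i}_n \neq i$ and the per-step drift is weaker than $D^*(i)$; I would handle it as in \cite{nitinawarat2013controlled}, using Assumption~\ref{steadinf} to show that the number of such steps is $o(N)$ with exponentially small failure probability (via a parallel concentration argument applied to $\mathcal{C}_i$ restricted to $\{\rho_n(i) \leq 1/2\}$), so that the burn-in costs only $o(N)$ in the BLLR and the overall concentration of $\mathcal{C}_i(\vct{\rho}_{N+1})/N$ around $D^*(i)$ is preserved.
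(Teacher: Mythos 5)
Your proposal is correct and follows essentially the same route as the paper: the same MAP-based Chernoff experiment selection rule, a posterior-threshold inference rule that is equivalent (up to $O(1)$ prior-dependent constants) to the paper's BLLR-threshold rule $\bar f$, a misclassification bound that is the same change-of-measure computation as Lemma~\ref{thresh} phrased via $\E[\mathbf{1}_{\{\hat{\rv{H}}_N=i\}}(1-\rho_{N+1}(i))]$, and a feasibility argument via Corollary~\ref{cortriv}, a Doob decomposition with Azuma's inequality, and exponential control of the exploration/burn-in time exactly as in the paper's proof. The only cosmetic difference is that you aim for a uniform $e^{-cN}$ tail bound and invoke the sub-exponential decay of $\epsilon_N$ from Assumption~\ref{epsassum}, whereas the paper calibrates its concentration constants $K_2,K_3$ to $\epsilon_N$ directly; both yield $\psi_N(i)\leq\epsilon_N$ for $N$ large.
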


Using Theorems \ref{lbthm} and \ref{achieve}, we can therefore conclude that
\begin{align}
\lim_{N \to \infty}-\frac{1}{N}\log\gamma_N^* = \min_{i\in \mathcal{H}}D^*(i).
\end{align}





\section{Proof of Main Results}\label{proofsec}

\subsection{Supporting Lemmas}
In this section, we describe some important properties of the confidence level $\mathcal{C}_i(\vct{\rho})$ which will be used in the proof of our main results.

\begin{lemma}\label{logsumexplemma}
For any experiment selection strategy $g$, we have
\begin{align}
&\nonumber\mathcal{C}_i(\vct{\rho}_{N+1}) - \mathcal{C}_i(\vct{\rho}_1) \\
&=  - \log\left[\sum_{j\neq i}\exp\left(\log\tilde{\rho}_1(j) +  \sum_{n=1}^N\lambda_i^j(\rv{U}_n,\rv{Y}_n)\right)\right],
\end{align}
where $\tilde{\rho}_1(j) = \rho_1(j)/(1-\rho_1(i)).$
\end{lemma}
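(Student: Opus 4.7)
The plan is a direct Bayes-rule computation; there is no real obstacle, only bookkeeping. First I would expand the left-hand side using the definition of the Bayesian log-likelihood ratio:
\begin{align*}
\mathcal{C}_i(\vct{\rho}_{N+1}) - \mathcal{C}_i(\vct{\rho}_1)
= \log\frac{\rho_{N+1}(i)}{1-\rho_{N+1}(i)} - \log\frac{\rho_1(i)}{1-\rho_1(i)}.
\end{align*}

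Next I would write the joint likelihood of $(U_{1:N}, Y_{1:N})$ under hypothesis $h$ for a fixed strategy $g$. Because $U_n$ is drawn from $g_n(I_n)$ independently of $H$ given $I_n$, and $Y_n = \xi(H, U_n, W_n)$, this joint likelihood factors as $\prod_{n=1}^N g_n(U_n \mid I_n)\, p_h^{U_n}(Y_n)$. Applying Bayes' rule, the $g_n$ factors appear identically in numerator and denominator and cancel, giving the strategy-independent posterior
\begin{align*}
\rho_{N+1}(k) = \frac{\rho_1(k)\prod_{n=1}^N p_k^{U_n}(Y_n)}{\sum_{\ell \in \mathcal{H}} \rho_1(\ell)\prod_{n=1}^N p_\ell^{U_n}(Y_n)},
\end{align*}
which is the formal justification for the remark following \eqref{postbelief}. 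Summing over $k \neq i$ yields a matching expression for $1-\rho_{N+1}(i)$.

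Substituting these into the ratio, the common denominator cancels and the factor $\rho_1(i)$ in the numerator combines with the $1/\rho_1(i)$ from subtracting $\mathcal{C}_i(\vct{\rho}_1)$, leaving
\begin{align*}
\mathcal{C}_i(\vct{\rho}_{N+1}) - \mathcal{C}_i(\vct{\rho}_1)
= \log \frac{(1-\rho_1(i))\prod_{n=1}^N p_i^{U_n}(Y_n)}{\sum_{j\neq i}\rho_1(j)\prod_{n=1}^N p_j^{U_n}(Y_n)}.
\end{align*}
Pulling $\prod_n p_i^{U_n}(Y_n)$ into the denominator and absorbing $1-\rho_1(i)$ into the weights converts this into
\begin{align*}
-\log\!\left[\sum_{j\neq i} \tilde{\rho}_1(j) \prod_{n=1}^N \frac{p_j^{U_n}(Y_n)}{p_i^{U_n}(Y_n)}\right],
\end{align*}
with $\tilde{\rho}_1(j) = \rho_1(j)/(1-\rho_1(i))$. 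Finally, rewriting the product inside the log as an exponential of a sum of $\lambda_i^j(U_n, Y_n) = \log(p_j^{U_n}(Y_n)/p_i^{U_n}(Y_n))$ gives exactly the claimed identity. The only conceptual point worth flagging in the write-up is the cancellation of the $g_n$ factors, since that is what makes the identity hold for \emph{any} experiment selection strategy $g$.
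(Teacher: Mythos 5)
Your proof is correct and follows essentially the same route as the paper's: expand the posterior via Bayes' rule, cancel the common denominator, absorb $1-\rho_1(i)$ into the weights $\tilde{\rho}_1(j)$, and rewrite the likelihood ratios as $\exp(\sum_n \lambda_i^j)$. The one thing you add beyond the paper's write-up is the explicit cancellation of the $g_n(U_n\mid I_n)$ factors, which the paper only asserts in the remark after its definition of the posterior belief; that is a worthwhile detail but not a different argument.
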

 \begin{proof}
 We have
 \begin{align}
 &\log\frac{\rho_{N+1}(i)}{1-\rho_{N+1}(i)}-\log\frac{\rho_1(i)}{1-\rho_1(i)} \\
 = &\log \frac{\rho_1(i)\prod_{n=1}^Np_i^{\rv{U}_n}(\rv{Y}_n)}{\sum_{j\neq i}\rho_1(j)\prod_{n=1}^Np_j^{\rv{U}_n}(\rv{Y}_n)}-\log\frac{\rho_1(i)}{1-\rho_1(i)} \\
 = &\log \frac{\prod_{n=1}^Np_i^{\rv{U}_n}(\rv{Y}_n)}{\sum_{j\neq i}\tilde{\rho}_1(j)\prod_{n=1}^Np_j^{\rv{U}_n}(\rv{Y}_n)} \\
 = & -\log\sum_{j\neq i}\tilde{\rho}_1(j)\frac{\prod_{n=1}^Np_j^{\rv{U}_n}(\rv{Y}_n)}{\prod_{n=1}^Np_i^{\rv{U}_n}(\rv{Y}_n)}\\
 \nonumber = & - \log\sum_{j\neq i}\exp(\log\tilde{\rho}_1(j) +  \sum_{n=1}^N\lambda_1^j(\rv{U}_n,\rv{Y}_n)).
 \end{align}
 \end{proof}

\begin{corollary}[\emph{Bounded increments}]\label{boundedcoro}
For any experiment selection strategy $g \in \mathcal{G}$, we have
\begin{align}
    |\mathcal{C}_i(\vct{\rho}_{N+1}) - \mathcal{C}_i(\vct{\rho}_1)| < NB,
\end{align}
with probability 1.
\end{corollary}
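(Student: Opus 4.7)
The plan is to read off the result directly from Lemma \ref{logsumexplemma} by showing that the quantity inside the outer logarithm is a convex combination of exponentials whose arguments are each bounded by $NB$ in absolute value.

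First I would rewrite the identity in Lemma \ref{logsumexplemma} by pulling the $\log \tilde{\rho}_1(j)$ term out of the exponent, obtaining
\[
\mathcal{C}_i(\vct{\rho}_{N+1}) - \mathcal{C}_i(\vct{\rho}_1) = -\log\left[\sum_{j \neq i} \tilde{\rho}_1(j) \exp\Bigl(\sum_{n=1}^{N} \lambda_i^j(\rv{U}_n, \rv{Y}_n)\Bigr)\right].
\]
The next observation is that $\{\tilde{\rho}_1(j)\}_{j \neq i}$ is a probability distribution on $\tilde{\mathcal{H}}_i$, since by its definition $\sum_{j \neq i} \tilde{\rho}_1(j) = \sum_{j \neq i} \rho_1(j)/(1-\rho_1(i)) = 1$. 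Hence the bracketed quantity is a convex combination of the $M-1$ positive numbers $\exp\bigl(\sum_{n=1}^N \lambda_i^j(\rv{U}_n, \rv{Y}_n)\bigr)$.

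Next I would invoke Assumption \ref{boundedassump}: since $|\lambda_i^j(u,y)| < B$ for every $u, y, i, j$, the triangle inequality gives $\bigl|\sum_{n=1}^N \lambda_i^j(\rv{U}_n, \rv{Y}_n)\bigr| < NB$ with probability one, so each exponential lies strictly between $e^{-NB}$ and $e^{NB}$. A convex combination of such numbers also lies strictly in $(e^{-NB}, e^{NB})$, and therefore the absolute value of its logarithm is strictly less than $NB$. Taking the minus sign into account yields the claimed bound.

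There is no real obstacle here: the only subtlety to flag is that Assumption \ref{boundedassump} gives a strict inequality uniformly in $(u,y,i,j)$, which is what lets us propagate the strict inequality through the (almost sure) finite sum and the convex combination. If the bound were only nonstrict this would still give $\leq NB$, which is enough for all the asymptotic uses later in the paper.
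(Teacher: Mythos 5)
Your proof is correct and follows essentially the same route as the paper: both start from Lemma \ref{logsumexplemma}, invoke Assumption \ref{boundedassump} to bound the exponents by $NB$, and use $\sum_{j\neq i}\tilde{\rho}_1(j)=1$. The only cosmetic difference is that you package the two one-sided bounds into a single convex-combination argument, whereas the paper derives the upper and lower bounds separately by monotonicity of $-\log\sum\exp(\cdot)$.
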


 \begin{proof}
 Using Lemma \ref{logsumexplemma}, we have
 \begin{align}
 \mathcal{C}_i(\vct{\rho}_{N+1}) - \mathcal{C}_i(\vct{\rho}_{1}) &=  - \log\sum_{j\neq i}\exp(\log\tilde{\rho}_1(j) +  \sum_{n=1}^N\lambda_i^j(\rv{U}_n,\rv{Y}_n))\\
 \shortintertext{Using Assumption \ref{boundedassump}, it follows that}
  &\leq - \log\sum_{j\neq i}\exp(\log\tilde{\rho}_1(j) -NB)\\
   &= - \log\sum_{j\neq i}\exp(\log\tilde{\rho}_1(j)) + NB\\
   &= - \log\sum_{j\neq i}\tilde{\rho}_j(1) + NB\\
   &= NB.
 \end{align}
 Similarly,
 \begin{align}
 \mathcal{C}_i(\vct{\rho}_{N+1}) - \mathcal{C}_i(\vct{\rho}_{1}) &=  - \log\sum_{j\neq i}\exp(\log\tilde{\rho}_1(j) +  \sum_{n=1}^N\lambda_i^j(\rv{U}_n,\rv{Y}_n))\\
 \shortintertext{Using Assumption \ref{boundedassump}, it follows that}
  &\geq - \log\sum_{j\neq i}\exp(\log\tilde{\rho}_1(j) +NB)\\
   &= - \log\sum_{j\neq i}\exp(\log\tilde{\rho}_1(j)) - NB\\
   &= - \log\sum_{j\neq i}\tilde{\rho}_j(1) - NB\\
   &= -NB.
 \end{align}
 
The same arguments can be used to show that for any experiment selection strategy $g \in \mathcal{G}$ and $1 \leq n \leq N$, we have
\begin{align}
    |\mathcal{C}_i(\vct{\rho}_{n+1}) - \mathcal{C}_i(\vct{\rho}_n)| < B,
\end{align}
with probability 1.
 \end{proof}

\begin{corollary}\label{cortriv}
If for every $j \neq i$, $ \sum_{n=1}^N\lambda_j^i(\rv{U}_n,\rv{Y}_n)) \geq \theta$ for some $\theta \in \R$, then
\begin{align}
    \mathcal{C}_i(\vct{\rho}_{N+1}) - \mathcal{C}_i(\vct{\rho}_{1}) \geq \theta.
\end{align}
\end{corollary}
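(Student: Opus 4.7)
The plan is to reduce the claim to a direct application of Lemma \ref{logsumexplemma} combined with the observation that $\lambda_j^i(u,y) = -\lambda_i^j(u,y)$ from the definition $\lambda_j^i(u,y) = \log(p_i^u(y)/p_j^u(y))$. So the hypothesis $\sum_{n=1}^N \lambda_j^i(U_n,Y_n) \geq \theta$ translates into a uniform \emph{upper} bound $\sum_{n=1}^N \lambda_i^j(U_n,Y_n) \leq -\theta$ for every $j \neq i$, which is exactly the sign one needs to lower bound the right-hand side of the identity in Lemma \ref{logsumexplemma}.

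Concretely, I would start from
\begin{align*}
\mathcal{C}_i(\vct{\rho}_{N+1}) - \mathcal{C}_i(\vct{\rho}_1)
= -\log\!\left[\sum_{j\neq i} \tilde{\rho}_1(j)\exp\!\Bigl(\sum_{n=1}^N \lambda_i^j(\rv{U}_n,\rv{Y}_n)\Bigr)\right],
\end{align*}
substitute the bound $\sum_{n=1}^N \lambda_i^j(\rv{U}_n,\rv{Y}_n) \leq -\theta$ inside each exponential, pull $e^{-\theta}$ out of the sum, and use $\sum_{j \neq i} \tilde{\rho}_1(j) = 1$ (which holds by the definition $\tilde{\rho}_1(j) = \rho_1(j)/(1-\rho_1(i))$). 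Taking the negative logarithm flips the inequality and yields $\mathcal{C}_i(\vct{\rho}_{N+1}) - \mathcal{C}_i(\vct{\rho}_1) \geq \theta$, as desired.

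There is really no main obstacle here: the only subtlety is keeping the direction of the inequality straight after negation and after applying $-\log(\cdot)$, which is monotone decreasing. The corollary is essentially a ``worst-competitor'' specialization of Lemma \ref{logsumexplemma}, in the same spirit as Corollary \ref{boundedcoro} but using a uniform lower bound on the log-likelihood increments in favor of hypothesis $i$ instead of the Assumption \ref{boundedassump} bound. I would present the argument as a short three-line display of inequalities immediately following the statement, with no separate lemmas invoked beyond Lemma \ref{logsumexplemma}.
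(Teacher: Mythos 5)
Your proposal is correct and follows essentially the same route as the paper's own proof: both start from the identity in Lemma \ref{logsumexplemma}, use $\lambda_i^j = -\lambda_j^i$ to turn the hypothesis into the upper bound $\sum_{n=1}^N\lambda_i^j(\rv{U}_n,\rv{Y}_n) \leq -\theta$ inside each exponential, factor out $e^{-\theta}$, and conclude via $\sum_{j\neq i}\tilde{\rho}_1(j)=1$ and the monotonicity of $-\log(\cdot)$. No gaps.
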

\begin{proof}
We have
\begin{align}
\mathcal{C}_i(\vct{\rho}_{N+1}) - \mathcal{C}_i(\vct{\rho}_{1}) &=  - \log\sum_{j\neq i}\exp(\log\tilde{\rho}_1(j) + \sum_{n=1}^N\lambda_i^j(\rv{U}_n,\rv{Y}_n))\\
 &\geq - \log\sum_{j\neq i}\exp(\log\tilde{\rho}_1(j) - \theta)\\
 &= - [\log\sum_{j\neq i}\exp(\log\tilde{\rho}_1(j))] + \theta\\
 &= - [\log\sum_{j\neq i}\tilde{\rho}_j(1)] + \theta\\
 &= \theta.
\end{align}
\end{proof}

\begin{lemma}\label{condexp}
For any experiment selection strategy $g$,
\begin{align}
  &\E^g_i  \left[\sum_{n=1}^N\lambda_j^i(\rv{U}_n,\rv{Y}_n)\right]= \E^g_i \left[\sum_{n=1}^N D(p_i^{\rv{U}_n}||p_j^{\rv{U}_n})\right].
\end{align}
\end{lemma}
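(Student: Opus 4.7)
The plan is to establish the identity term-by-term via the tower property of conditional expectation, exploiting the fact that conditional on the current experiment $U_n$ and hypothesis $H=i$, the observation $Y_n$ has distribution $p_i^{U_n}$.

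First, I would fix a time index $n$ and condition on the information pair $(\rv{I}_n, \rv{U}_n)$ together with $\rv{H}=i$. By the observation model $\rv{Y}_n = \xi(\rv{H}, \rv{U}_n, \rv{W}_n)$ and the independence of $\rv{W}_n$ from past randomness, the conditional distribution of $\rv{Y}_n$ given $(\rv{H}=i, \rv{U}_n = u, \rv{I}_n)$ is simply $p_i^u$. Therefore
\begin{align}
\E_i^g\!\left[\lambda_j^i(\rv{U}_n, \rv{Y}_n) \,\middle|\, \rv{I}_n, \rv{U}_n\right] &= \sum_{y \in \mathcal{Y}} p_i^{\rv{U}_n}(y)\log\frac{p_i^{\rv{U}_n}(y)}{p_j^{\rv{U}_n}(y)} \\
&= D(p_i^{\rv{U}_n} \,\|\, p_j^{\rv{U}_n}).
\end{align}

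Next, I would apply the tower property by taking the outer expectation $\E_i^g[\cdot]$ over $(\rv{I}_n, \rv{U}_n)$, yielding
\begin{align}
\E_i^g\!\left[\lambda_j^i(\rv{U}_n, \rv{Y}_n)\right] = \E_i^g\!\left[D(p_i^{\rv{U}_n} \,\|\, p_j^{\rv{U}_n})\right].
\end{align}
Summing this identity from $n=1$ to $N$ and using linearity of expectation gives the claimed equality. The boundedness guaranteed by Assumption \ref{boundedassump} ensures all expectations are finite and the swap of sum and expectation is justified.

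There is no substantive obstacle here; the only subtlety to state carefully is why conditioning on $(\rv{I}_n, \rv{U}_n, \rv{H}=i)$ collapses the conditional law of $\rv{Y}_n$ to $p_i^{\rv{U}_n}$. This follows because $\rv{U}_n$ is a (randomized) function of $\rv{I}_n$ alone under strategy $g$, and $\rv{W}_n$ is independent of $(\rv{H}, \rv{I}_n, \rv{U}_n)$, so no additional dependence on $g$ or on the history enters the inner expectation.
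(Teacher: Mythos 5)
Your proof is correct and follows essentially the same route as the paper: condition on the chosen experiment, observe that the inner conditional expectation of $\lambda_j^i(\rv{U}_n,\rv{Y}_n)$ under hypothesis $i$ equals $D(p_i^{\rv{U}_n}\|p_j^{\rv{U}_n})$, and apply the tower property with linearity of expectation. Your version is slightly more explicit about why the conditional law of $\rv{Y}_n$ collapses to $p_i^{\rv{U}_n}$, but the argument is the same.
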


 \begin{proof}
 \begin{align}
     \E^g_i  \sum_{n=1}^N\lambda_j^i(\rv{U}_n,\rv{Y}_n) &= \E^g_i\sum_{n=1}^N\E_i[\lambda_j^i(\rv{U}_n,\rv{Y}_n)\mid \rv{U}_n]\\
     &= \E^g_i\sum_{n=1}^N D(p_i^{\rv{U}_n}||p_j^{\rv{U}_n}).
 \end{align}
 The last inequality follows from the fact that the observations $\rv{Y}_n$ are independent conditioned on the experiment $\rv{U}_n$.
 \end{proof}

\begin{definition}
We define the following distributions:
\begin{align}
\vct{\alpha}^{i*} &:= \argmax_{\vct{\alpha} \in \Delta\mathcal{U}} \min_{j\neq i} \sum_{u}\alpha(u) D(p_i^u || p_j^u),\\
\vct{\beta}^{i*} &:= \argmin_{\vct{\beta} \in \Delta\tilde{\mathcal{H}}_i} \max_{u \in \mathcal{U}} \sum_{j \neq 1}\beta(j) D(p_i^u || p_j^u),
\end{align}
where $\tilde{\mathcal{H}}_i = \mathcal{H} \setminus \{i\}$.
\end{definition}

\begin{lemma}\label{kldbound}
For any experiment selection strategy $g$, we have
\begin{align}
    J_N^g(i) \leq D^*(i) - \frac{\sum_{j \in \tilde{\mathcal{H}}_i}\beta^{i*}(j)\log\tilde{\rho}_1(j)}{N},
\end{align}
where $\tilde{\rho}_1(j) = \rho_1(j)/(1-\rho_1(i)).$
\end{lemma}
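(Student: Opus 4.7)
The plan is to start from the log-sum-exp representation of $\mathcal{C}_i(\vct{\rho}_{N+1}) - \mathcal{C}_i(\vct{\rho}_1)$ provided by Lemma \ref{logsumexplemma}, namely
\begin{align*}
\mathcal{C}_i(\vct{\rho}_{N+1}) - \mathcal{C}_i(\vct{\rho}_1) = -\log\left[\sum_{j \neq i}\exp\left(\log\tilde{\rho}_1(j) + \sum_{n=1}^N \lambda_i^j(\rv{U}_n,\rv{Y}_n)\right)\right],
\end{align*}
and convert this into a linear upper bound using the elementary inequality $\log\sum_j e^{x_j} \geq \sum_j \beta(j) x_j$, which holds for any distribution $\vct{\beta} \in \Delta\tilde{\mathcal{H}}_i$ because $\log\sum_j e^{x_j} \geq \max_j x_j \geq \sum_j \beta(j) x_j$. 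Applying this with $\vct{\beta} = \vct{\beta}^{i*}$ and $x_j = \log\tilde{\rho}_1(j) + \sum_{n=1}^N \lambda_i^j(\rv{U}_n,\rv{Y}_n)$ yields
\begin{align*}
\mathcal{C}_i(\vct{\rho}_{N+1}) - \mathcal{C}_i(\vct{\rho}_1) \leq -\sum_{j \neq i}\beta^{i*}(j)\log\tilde{\rho}_1(j) - \sum_{j \neq i}\beta^{i*}(j)\sum_{n=1}^N \lambda_i^j(\rv{U}_n,\rv{Y}_n).
\end{align*}

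Next I would take the expectation $\E_i^g[\cdot]$ on both sides and swap the finite sums with the expectation. Since $\lambda_i^j(u,y) = -\lambda_j^i(u,y)$ by the definition in Assumption \ref{boundedassump}, Lemma \ref{condexp} gives $\E_i^g[\lambda_i^j(\rv{U}_n,\rv{Y}_n)] = -\E_i^g[D(p_i^{\rv{U}_n} \| p_j^{\rv{U}_n})]$. Collecting terms, the bound becomes
\begin{align*}
\E_i^g\!\left[\mathcal{C}_i(\vct{\rho}_{N+1}) - \mathcal{C}_i(\vct{\rho}_1)\right] \leq -\sum_{j \neq i}\beta^{i*}(j)\log\tilde{\rho}_1(j) + \E_i^g\!\left[\sum_{n=1}^N \sum_{j \neq i}\beta^{i*}(j) D(p_i^{\rv{U}_n} \| p_j^{\rv{U}_n})\right].
\end{align*}

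To finish, I would invoke the defining property of $\vct{\beta}^{i*}$: by the second expression in \eqref{eq:dstarti}, $\max_{u \in \mathcal{U}}\sum_{j \neq i}\beta^{i*}(j) D(p_i^u \| p_j^u) = D^*(i)$, so the integrand inside the expectation is pointwise bounded by $N \cdot D^*(i)$ regardless of the realized experiments $\rv{U}_{1:N}$. Dividing through by $N$ and recalling the definition of $J_N^g(i)$ then gives the claimed inequality. The only subtle step is the first one, recognizing the right convexity inequality for log-sum-exp that turns the intractable mixture inside the logarithm into a linear combination of the log-likelihood sums; everything after that is just taking expectations and applying the minimax identity for $D^*(i)$.
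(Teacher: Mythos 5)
Your proposal is correct and follows essentially the same route as the paper: the paper's proof also bounds $-\log\sum_{j}e^{x_j}$ by dropping all but one term (equivalently, $\log\sum_j e^{x_j}\geq \max_j x_j \geq \sum_j \beta^{i*}(j)x_j$), then takes expectations via Lemma \ref{condexp} and invokes the minimax characterization of $\vct{\beta}^{i*}$ to bound the per-step divergence term by $D^*(i)$. The only difference is presentational---the paper averages the single-term bounds over $j$ rather than stating the log-sum-exp inequality directly---so there is nothing substantive to add.
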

 \begin{proof}
 For every $j \neq i$, we have the following since $\log x$ is an increasing function 
 \begin{align}
     &\nonumber\mathcal{C}_i(\vct{\rho}_{N+1}) - \mathcal{C}_i(\vct{\rho}_1) \\
      & = - \log\sum_{j\neq i}\exp(\log\tilde{\rho}_1(j) +  \sum_{n=1}^N\lambda_i^j(\rv{U}_n,\rv{Y}_n))\\
      &\leq - \log\exp(\log\tilde{\rho}_1(j) +  \sum_{n=1}^N\lambda_i^j(\rv{U}_n,\rv{Y}_n))\\
      &= - \log\tilde{\rho}_1(j) +  \sum_{n=1}^N\lambda_j^i(\rv{U}_n,\rv{Y}_n).
 \end{align}
 Therefore,
 \begin{align}
      &\nonumber\mathcal{C}_i(\vct{\rho}_{N+1}) - \mathcal{C}_i(\vct{\rho}_1) \\
      &\leq \sum_{j\neq i}\beta^{i*}(j)[- \log\tilde{\rho}_1(j) +  \sum_{n=1}^N\lambda_j^i(\rv{U}_n,\rv{Y}_n)].
 \end{align}
 Further,
 \begin{align}
     J_N^g(i)  &= \frac{1}{N} \E^{g} \left[\mathcal{C}_{i}(\vct{\rho}_{N+1})- \mathcal{C}_{i}(\vct{\rho}_1) \mid \rv{H} = i\right]\\
     &\leq \frac{1}{N}\E^g\left[\sum_{j\neq i}\beta^{i*}(j)[- \log\tilde{\rho}_1(j) +  \sum_{n=1}^N\lambda_j^i(\rv{U}_n,\rv{Y}_n)]\mid \rv{H} = i\right]\\
     &= - \frac{\sum_{j \in \tilde{\mathcal{H}}_i}\beta^{i*}(j)\log\tilde{\rho}_1(j)}{N} + \frac{1}{N}\E^g\left[\sum_{j\neq i}\beta^{i*}(j)[\sum_{n=1}^N\lambda_j^i(\rv{U}_n,\rv{Y}_n)]\mid \rv{H} = i\right]\\
     \shortintertext{Using Lemma \ref{condexp}, we have}
     &= - \frac{\sum_{j \in \tilde{\mathcal{H}}_i}\beta^{i*}(j)\log\tilde{\rho}_1(j)}{N} + \frac{1}{N}\E^g\left[\sum_{j\neq i}\beta^{i*}(j)[\sum_{n=1}^ND(p_i^{\rv{U}_n}||p_j^{\rv{U}_n})]\mid \rv{H} = i\right]\\
     &= - \frac{\sum_{j \in \tilde{\mathcal{H}}_i}\beta^{i*}(j)\log\tilde{\rho}_1(j)}{N} + \frac{1}{N}\E^g\left[\sum_{n=1}^N\sum_{j\neq i}\beta^{i*}(j)D(p_i^{\rv{U}_n}||p_j^{\rv{U}_n})\mid \rv{H} = i\right]\\
     \shortintertext{Since $\vct{\beta}^{i*}$ is the minimax distribution, we have}
     &\leq - \frac{\sum_{j \in \tilde{\mathcal{H}}_i}\beta^{i*}(j)\log\tilde{\rho}_1(j)}{N} + \frac{1}{N}\E^g\left[\sum_{n=1}^N D^*(i)\mid \rv{H} = i\right]\\
     &= - \frac{\sum_{j \in \tilde{\mathcal{H}}_i}\beta^{i*}(j)\log\tilde{\rho}_1(j)}{N} + D^*(i).
 \end{align}
 \end{proof}

\begin{lemma}\label{thresh}
Let $f$ be an inference strategy in which hypothesis $i$ is decided if and only if $\mathcal{C}_i(\vct{\rho}_{N+1}) - \mathcal{C}_i(\vct{\rho}_1) \geq \theta$. Then
\begin{equation}
   \phi_N(i) \leq e^{-\theta}.
\end{equation}
\end{lemma}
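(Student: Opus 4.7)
The plan is a likelihood-ratio / change-of-measure argument in the spirit of Neyman--Pearson. First I would rewrite the conditional probability by summing over the true hypothesis:
$$\phi_N(i) = \Py^{f,g}[\hat{H}_N = i \mid H \neq i] = \sum_{j \neq i} \tilde{\rho}_1(j)\, \Py^{f,g}[A \mid H = j],$$
where $A := \{\mathcal{C}_i(\vct{\rho}_{N+1}) - \mathcal{C}_i(\vct{\rho}_1) \geq \theta\}$ denotes the event that triggers a decision for hypothesis $i$, and $\tilde{\rho}_1(j) = \rho_1(j)/(1-\rho_1(i))$ as in Lemma \ref{logsumexplemma}.

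Next, I would invoke the intermediate identity obtained inside the proof of Lemma \ref{logsumexplemma},
$$\mathcal{C}_i(\vct{\rho}_{N+1}) - \mathcal{C}_i(\vct{\rho}_1) = \log\frac{\prod_{n=1}^N p_i^{U_n}(Y_n)}{\sum_{j \neq i}\tilde{\rho}_1(j)\prod_{n=1}^N p_j^{U_n}(Y_n)}.$$
On the event $A$ this gives the pointwise bound
$$\sum_{j \neq i}\tilde{\rho}_1(j)\prod_{n=1}^N p_j^{U_n}(Y_n) \;\leq\; e^{-\theta}\prod_{n=1}^N p_i^{U_n}(Y_n).$$

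Finally, I would exploit the fact that under any hypothesis $h$, the joint distribution of $(U_{1:N}, Y_{1:N})$ induced by the strategy $g$ factors as $\prod_n g_n(U_n \mid I_n)\, p_h^{U_n}(Y_n)$, where the kernels $g_n$ depend on the past observations but not on $h$. Summing the pointwise inequality over all $(u,y) \in A$ against this common measure, the $g_n$ factors drop out and yield
$$\sum_{j \neq i}\tilde{\rho}_1(j)\,\Py^{f,g}[A \mid H = j] \;\leq\; e^{-\theta}\,\Py^{f,g}[A \mid H = i] \;\leq\; e^{-\theta},$$
which is exactly the claim.

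Because the argument is a standard Neyman--Pearson-style estimate, there is no real technical obstacle; the only point worth emphasising carefully is that the adaptive experiment-selection kernel $g_n$ is a function of the observed history $I_n$ alone, so it is identical under every hypothesis. This is what makes the likelihood-ratio cancellation legitimate and is built into the definition of the strategy $g$.
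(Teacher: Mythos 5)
Your proof is correct and is essentially the paper's own argument: a change of measure to hypothesis $i$ on the acceptance region, using the fact that there the posterior log-odds $\mathcal{C}_i(\vct{\rho}_{N+1})$ exceed $\mathcal{C}_i(\vct{\rho}_1)+\theta$. The only cosmetic difference is that you expand the likelihood ratio via the product-form identity of Lemma~\ref{logsumexplemma} and cancel the adaptive kernels $g_n$ explicitly, whereas the paper performs the same cancellation implicitly by working directly with the joint probabilities $\Py^{g}[\rv{I}_{N+1}=\iota_{N+1},\rv{H}=h]$.
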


\begin{proof}
In this proof, substitute $n$ with $N$. Let $Z_n$ be the region in which the inference policy $f$ selects hypothesis $i$, that is
$$Z_{n+1} := \{\iota_{n+1}: f_{n+1}(\iota_{n+1}) = i \text{ and } \Py^{f,g}[\rv{I}_{n+1} = \iota_{n+1}] \neq 0\}. $$
We have
\begin{align}
    &\Py^{f,g}[\hat{\rv{H}}_{n+1} = i \text{ and } \rv{H} \neq i ] \\
    &= \Py^{g}[\rv{I}_{n+1} \in Z_{n+1} \text{ and } \rv{H} \neq i ]\\
    &= \sum_{\iota_{n+1} \in Z_{n+1}}\Py^{g}[\rv{I}_{n+1} = \iota_{n+1} \text{ and } \rv{H} \neq i ]\\
    &= \sum_{\iota_{n+1} \in Z_{n+1}}\Py^{g}[\rv{I}_{n+1} = \iota_{n+1} \text{ and } \rv{H} = i ]\exp\left[-\log\frac{\Py^{g}[\rv{I}_{n+1} = \iota_{n+1} \text{ and } \rv{H} = i ]}{\Py^{g}[\rv{I}_{n+1} = \iota_{n+1} \text{ and } \rv{H} \neq i ]}\right]\\
    &= \sum_{\iota_{n+1} \in Z_{n+1}}\Py^{g}[\rv{I}_{n+1} = \iota_{n+1} \text{ and } \rv{H} = i ]\exp\left[-\mathcal{C}_i(\iota_{n+1})\right]\\
    &\leq \sum_{\iota_{n+1} \in Z_{n+1}}\Py^{g}[\rv{I}_{n+1} = \iota_{n+1} \text{ and } \rv{H} = i ]\exp\left[-(\theta + \mathcal{C}_i(\vct{\rho}_1))\right]\\
    &\leq \rho_1(i)e^{-(\theta + \mathcal{C}_i(\vct{\rho}_1))}.
\end{align}
Therefore,
\begin{equation}
    \Py^{f,g}[\hat{\rv{H}}_{n+1} = i \mid \rv{H} \neq i ] \leq e^{-\theta}.
\end{equation}
\end{proof}

\subsection{Sub-problem vis-\`a-vis Chernoff-Stein}
We formulate a sub-problem in this section that will be useful for analyzing Problem (\ref{opt1}).
For hypothesis $i\in \mathcal{H}$, consider the following optimization problem: 
\begin{align}\tag{P2}\label{opti}
    & & & \underset{f \in \mathcal{F},g \in \mathcal{G}}{\text{min}} & & \phi_N(i) & \\
    \nonumber & & & \text{subject to} & & \psi_N(i) \leq \epsilon_N. & 
\end{align}
Let the infimum value of this optimization problem be $\phi^*_N(i)$. 
Note that this problem is always feasible because the agent can trivially satisfy the type-$i$ error constraint by always declaring hypothesis $i$.
\begin{remark}
When there is only one experiment, two hypotheses and the inconclusive decision $\varnothing$ is not allowed, this formulation is identical to that of the Chernoff-Stein lemma \cite{cover2012elements}.
\end{remark}

We follow the proof methodology of the Chernoff-Stein lemma in \cite{cover2012elements}, but with some important modifications. For each experiment selection strategy $g$ and inference strategy $f$ that satisfy the type-$i$ error constraint, we first establish a lower bound on the error probability $\phi_N(i)$ based on the expected confidence rate $J_N^g$. {In \cite{cover2012elements}, the lower bound is obtained using a typicality argument. However, such typicality properties may not hold for every experiment selection strategy $g$.} Thus, we use a different approach to obtain a similar lower bound. We then use Lemma \ref{kldbound} to obtain a lower bound on $\phi_N(i)$ that does not depend on the strategies $g$ and $f$. Further, we construct strategies that asymptotically achieve this strategy-independent lower bound. The construction of these strategies and the analysis thereof builds on the achievability proofs in \cite{cover2012elements} and \cite{chernoff1959sequential}.
\begin{lemma}\label{cslb}
Let $g$ be any experiment selection strategy and let $f$ be any inference strategy such that
$\psi_N(i) \leq  \epsilon_N.$ Then
\begin{align}\label{eq:cslb}
    -\frac{1}{N}\log\phi_N(i)  &\leq J_{N}^g(i) + \frac{2B\epsilon_N }{1-\epsilon_N}-\frac{1}{N}\log(1-\epsilon_N).
\end{align}
\end{lemma}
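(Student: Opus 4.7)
The plan is to identify the BLLR increment $\mathcal{C}_i(\vct{\rho}_{N+1}) - \mathcal{C}_i(\vct{\rho}_1)$ with the log-likelihood ratio $\log\bigl(\Py^g[\rv{I}_{N+1}\mid \rv{H}=i]/\Py^g[\rv{I}_{N+1}\mid \rv{H}\neq i]\bigr)$ between the hypothesis $\rv{H}=i$ and the composite alternative $\rv{H}\neq i$ given the realized history $\rv{I}_{N+1}$. Under this identification, $NJ_N^g(i)$ is the KL divergence between the two induced distributions of $\rv{I}_{N+1}$, and the event $A := \{\hat{\rv{H}}_N = i\}$ plays the role of a (randomized) test between them, with error probabilities $\psi_N(i)$ under $\rv{H}=i$ and $\phi_N(i)$ under $\rv{H}\neq i$. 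Reducing Lemma \ref{cslb} to a statement about a binary test between two distributions is the same starting point as the data-processing proof of Chernoff--Stein, but here I avoid typicality because the strategy $g$ may make the observations non-i.i.d.

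The main calculation splits
\[
NJ_N^g(i) = \E_i^g\bigl[\mathcal{C}_i(\vct{\rho}_{N+1}) - \mathcal{C}_i(\vct{\rho}_1);\, A\bigr] + \E_i^g\bigl[\mathcal{C}_i(\vct{\rho}_{N+1}) - \mathcal{C}_i(\vct{\rho}_1);\, A^c\bigr].
\]
The $A^c$-term is at least $-NB\psi_N(i) \geq -NB\epsilon_N$ by Corollary \ref{boundedcoro}. For the $A$-term I would apply Jensen's inequality to $\log$ composed with the reciprocal likelihood ratio; the key computation, via Bayes' rule, is
\[
\E_i^g\!\left[\frac{\Py^g[\rv{I}_{N+1}\mid \rv{H}\neq i]}{\Py^g[\rv{I}_{N+1}\mid \rv{H}=i]} \;\bigg|\; A\right] = \frac{\phi_N(i)}{1-\psi_N(i)},
\]
from which Jensen gives $\E_i^g[\mathcal{C}_i(\vct{\rho}_{N+1}) - \mathcal{C}_i(\vct{\rho}_1);\, A] \geq (1-\psi_N(i))\bigl[\log(1-\psi_N(i)) - \log\phi_N(i)\bigr]$. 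Combining with the $A^c$ bound and solving for $-\log\phi_N(i)$ yields the intermediate estimate
\[
-\tfrac{1}{N}\log\phi_N(i) \leq \tfrac{J_N^g(i)}{1-\psi_N(i)} + \tfrac{B\epsilon_N}{1-\psi_N(i)} - \tfrac{1}{N}\log(1-\psi_N(i)).
\]

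The last step is cosmetic. Using $\psi_N(i) \leq \epsilon_N$ immediately replaces the denominators by $1-\epsilon_N$ in the last two terms. For the first term I would split
\[
\frac{J_N^g(i)}{1-\psi_N(i)} = J_N^g(i) + \frac{\psi_N(i) J_N^g(i)}{1-\psi_N(i)} \leq J_N^g(i) + \frac{B\epsilon_N}{1-\epsilon_N},
\]
using $|J_N^g(i)| \leq B$, which also follows from Corollary \ref{boundedcoro}. Adding the two copies of $B\epsilon_N/(1-\epsilon_N)$ produces exactly the constant $2B\epsilon_N/(1-\epsilon_N)$ in \eqref{eq:cslb}. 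The only substantive step is the Jensen bound on the $A$-term --- once $\mathcal{C}_i(\vct{\rho}_{N+1}) - \mathcal{C}_i(\vct{\rho}_1)$ is recognized as a log-likelihood ratio that inequality is essentially forced --- and the subtlety to watch is that $J_N^g(i)/(1-\psi_N(i))$ is not automatically close to $J_N^g(i)$, which is why the almost-sure boundedness furnished by Corollary \ref{boundedcoro} is needed to close out.
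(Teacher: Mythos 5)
Your proof is correct and follows essentially the same route as the paper's: both identify $\mathcal{C}_i(\vct{\rho}_{N+1})-\mathcal{C}_i(\vct{\rho}_1)$ as the log-likelihood ratio between $\rv{H}=i$ and the composite alternative, apply Jensen's inequality to the reciprocal likelihood ratio conditioned on the acceptance region $A=\{\hat{\rv{H}}_N=i\}$, split $NJ_N^g(i)$ over $A$ and $A^c$, and control the $A^c$ contribution and the $\psi_N(i)/(1-\psi_N(i))$ correction terms via the almost-sure bound of Corollary \ref{boundedcoro}. The only difference is cosmetic: you work with conditional likelihoods so the prior terms cancel immediately, whereas the paper carries $\mathcal{C}_i(\vct{\rho}_1)$ and $\log\rho_1(i)$ through the computation and cancels them at the end.
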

\begin{proof}
In this proof, substitute $n$ with $N$ and $\epsilon$ with $\epsilon_N$. Also, if the belief $\vct{\rho}_{n+1}$ is formed using information $\iota_{n+1}$, we denote $\mathcal{C}_i(\vct{\rho}_{n+1})$ with $\mathcal{C}_i(\iota_{n+1})$ to emphasize dependence on $\iota_{n+1}$. Let $Z_n$ be the region in which the inference policy $f$ selects hypothesis $i$, that is
$$Z_{n+1} := \{\iota_{n+1}: f_{n+1}(\iota_{n+1}) = i \text{ and } \Py^{f,g}[\rv{I}_{n+1} = \iota_{n+1}] \neq 0\}. $$
We have
\begin{align}
    &\Py^{f,g}[\hat{\rv{H}}_{n+1} = i \text{ and } \rv{H} \neq i ] \\
    &= \Py^{g}[\rv{I}_{n+1} \in Z_{n+1} \text{ and } \rv{H} \neq i ]\\
    &= \sum_{\iota_{n+1} \in Z_{n+1}}\Py^{g}[\rv{I}_{n+1} = \iota_{n+1} \text{ and } \rv{H} \neq i ]\\
    &= \sum_{\iota_{n+1} \in Z_{n+1}}\Py^{g}[\rv{I}_{n+1} = \iota_{n+1} \text{ and } \rv{H} = i ]\exp\left[-\log\frac{\Py^{g}[\rv{I}_{n+1} = \iota_{n+1} \text{ and } \rv{H} = i ]}{\Py^{g}[\rv{I}_{n+1} = \iota_{n+1} \text{ and } \rv{H} \neq i ]}\right]\\
    &= \sum_{\iota_{n+1} \in Z_{n+1}}\Py^{g}[\rv{I}_{n+1} = \iota_{n+1} \text{ and } \rv{H} = i ]\exp\left[-\mathcal{C}_i(\iota_{n+1})\right]\\
    &= \Py^{g}[\rv{I}_{n+1} \in Z_{n+1} \text{ and } \rv{H} = i ]\sum_{\iota_{n+1} \in Z_{n+1}}\frac{\Py^{g}[\rv{I}_{n+1} = \iota_{n+1} \text{ and } \rv{H} = i ]}{\Py^{g}[\rv{I}_{n+1} \in Z_{n+1} \text{ and } \rv{H} = i ]}\exp\left[-\mathcal{C}_i(\iota_{n+1})\right]\\
    &= \Py^{g}[\rv{I}_{n+1} \in Z_{n+1} \text{ and } \rv{H} = i ]\sum_{\iota_{n+1} \in Z_{n+1}}{\Py^{g}[\rv{I}_{n+1} = \iota_{n+1} \mid \rv{I}_{n+1} \in Z_{n+1}\text{ and } \rv{H} = i ]}\exp\left[-\mathcal{C}_i(\iota_{n+1})\right]\\
    &= \Py^{g}[\rv{I}_{n+1} \in Z_{n+1} \text{ and } \rv{H} = i ]{\E^{g}[\exp\left[-\mathcal{C}_i(\rv{I}_{n+1})\right] \mid \rv{I}_{n+1} \in Z_{n+1}\text{ and } \rv{H} = i ]}.
\end{align}
The function $-\log x$ is convex in $x$ and thus, using Jensen's inequality, we have
\begin{align}
    &-\frac{1}{n}\log\Py^{f,g}[\hat{\rv{H}}_{n+1} = i \text{ and } \rv{H} \neq i ] \\
    &\leq -\frac{1}{n}\log\Py^{g}[\rv{I}_{n+1} \in Z_{n+1} \text{ and } \rv{H} = i ]+\frac{1}{n}{\E^{g}[\mathcal{C}_i(\rv{I}_{n+1}) \mid \rv{I}_{n+1} \in Z_{n+1}\text{ and } \rv{H} = i ]}\\
    &= -\frac{1}{n}\log\Py^{f,g}[\hat{\rv{H}}_{n+1} = i \text{ and } \rv{H} = i ]+\frac{1}{n}{\E^{g}[\mathcal{C}_i(\rv{I}_{n+1}) \mid \rv{I}_{n+1} \in Z_{n+1}\text{ and } \rv{H} = i ]}\\
    &= -\frac{1}{n}\log\Py^{f,g}[\hat{\rv{H}}_{n+1} = i \mid \rv{H} = i ] - \frac{1}{n}\log \rho_1(i) +  \frac{1}{n}{\E^{g}[\mathcal{C}_i(\rv{I}_{n+1}) \mid \rv{I}_{n+1} \in Z_{n+1}\text{ and } \rv{H} = i ]}\\
    \label{start1}&\leq -\frac{1}{n}\log(1-\epsilon) - \frac{1}{n}\log \rho_1(i) +  \frac{1}{n}{\E^{g}[\mathcal{C}_i(\rv{I}_{n+1}) \mid \rv{I}_{n+1} \in Z_{n+1}\text{ and } \rv{H} = i ]}.
\end{align}
Further, we have
\begin{align}
    nJ_{n}^g(i) + \mathcal{C}_i(\vct{\rho}_1) = {\E^{g}[\mathcal{C}_i(\rv{I}_{n+1}) \mid \rv{H} = i ]} &= \Py^{f,g}[\hat{\rv{H}}_{n+1} = i \mid \rv{H} = i ]{\E^{g}[\mathcal{C}_i(\rv{I}_{n+1}) \mid \rv{I}_{n+1} \in Z_{n+1}\text{ and } \rv{H} = i ]} \\
    &+ \Py^{f,g}[\hat{\rv{H}}_{n+1} \neq i \mid \rv{H} = i ]{\E^{g}[\mathcal{C}_i(\rv{I}_{n+1}) \mid \rv{I}_{n+1} \notin Z_{n+1}\text{ and } \rv{H} = i ]}.
\end{align}
Therefore,
\begin{align}
    &\frac{1}{n}{\E^{g}[\mathcal{C}_i(\rv{I}_{n+1}) \mid \rv{I}_{n+1} \in Z_{n+1}\text{ and } \rv{H} = i ]} \\
    &= \frac{1}{n}\times\frac{nJ_n^g(i) +\mathcal{C}_i(\vct{\rho}_1)-\Py^{f,g}[\hat{\rv{H}}_{n+1} \neq i \mid \rv{H} = i ]{\E^{g}[\mathcal{C}_i(\rv{I}_{n+1}) \mid \rv{I}_{n+1} \notin Z_{n+1}\text{ and } \rv{H} = i ]}}{\Py^{f,g}[\hat{\rv{H}}_{n+1} = i \mid \rv{H} = i ]}\\
    &= \frac{J_n^g(i)+\frac{1}{n}\mathcal{C}_i(\vct{\rho}_1)-\frac{1}{n}\Py^{f,g}[\hat{\rv{H}}_{n+1} \neq i \mid \rv{H} = i ]{\E^{g}[\mathcal{C}_i(\rv{I}_{n+1}) \mid \rv{I}_{n+1} \notin Z_{n+1}\text{ and } \rv{H} = i ]}}{\Py^{f,g}[\hat{\rv{H}}_{n+1} = i \mid \rv{H} = i ]}\\
    \shortintertext{Since $\Py^{f,g}[\hat{\rv{H}}_{n+1} = i \mid \rv{H} = i ] \geq 1-\epsilon$ and from Corollary \ref{boundedcoro}, we have }
    &\leq \frac{J_n^g(i)+\frac{1}{n}\mathcal{C}_i(\vct{\rho}_1)+\epsilon (B - \frac{1}{n}\mathcal{C}_i(\vct{\rho}_{1}))}{1-\epsilon}\\
    &= \frac{J_n^g(i)+\epsilon B }{1-\epsilon}+\frac{1}{n}\mathcal{C}_i(\vct{\rho}_1)\\
    &= J_n^g(i) + \frac{\epsilon J_n^g(i)+\epsilon B }{1-\epsilon}+\frac{1}{n}\mathcal{C}_i(\vct{\rho}_1)\\
    &\leq J_n^g(i) + \frac{2\epsilon B }{1-\epsilon}+\frac{1}{n}\mathcal{C}_i(\vct{\rho}_1).
\end{align}
The last inequality follows once again from Corollary \ref{boundedcoro}. Hence using inequality (\ref{start1}), we have
\begin{align}
    -\frac{1}{n}\log\Py^{f,g}[\hat{\rv{H}}_{n+1} = i \text{ and } \rv{H} \neq i ] &\leq J_n^g(i) + \frac{2\epsilon B }{1-\epsilon}+\frac{1}{n}\mathcal{C}_i(\vct{\rho}_1)-\frac{1}{n}\log(1-\epsilon) - \frac{1}{n}\log \rho_1(i)\\
    &= J_n^g(i) + \frac{2\epsilon B }{1-\epsilon}-\frac{1}{n}\log(1-\epsilon) - \frac{1}{n}\log (1-\rho_1(i)).
\end{align}
Therefore,
\begin{align}
    -\frac{1}{n}\log\Py^{f,g}[\hat{\rv{H}}_{n+1} = i \mid \rv{H} \neq i ] &\leq J_n^g(i) + \frac{2\epsilon B }{1-\epsilon}-\frac{1}{n}\log(1-\epsilon).
\end{align}

\end{proof}


\begin{lemma}\label{cslb2}
Let $g$ be any experiment selection strategy and let $f$ be any inference strategy such that
$\psi_N(i) \leq  \epsilon_N.$ Then there exist positive constants $K_1(i) \leq K'_1(i) $ that do not depend on $N$ such that
\begin{align}
    -\frac{1}{N}\log\phi_N(i)  &\leq J^g_N(i) + \frac{K_1(i)}{N}
    \leq D^*(i) + \frac{K'_1(i)}{N}.
\end{align}
\end{lemma}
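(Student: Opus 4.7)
The plan is to chain Lemma \ref{cslb} and Lemma \ref{kldbound}: Lemma \ref{cslb} already controls $-\frac{1}{N}\log\phi_N(i)$ by $J^g_N(i)$ plus two $N$-dependent error terms, and Lemma \ref{kldbound} already controls $J^g_N(i)$ by $D^*(i)$ plus another $N$-dependent term. The only real work is repackaging these error terms as $K/N$ for constants that do not depend on $N$.

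For the first inequality, I would start from the conclusion of Lemma \ref{cslb},
\begin{equation*}
-\frac{1}{N}\log\phi_N(i) \leq J_N^g(i) + \frac{2B\epsilon_N}{1-\epsilon_N} - \frac{1}{N}\log(1-\epsilon_N),
\end{equation*}
and use Assumption \ref{epsassum} to turn both trailing terms into $O(1/N)$. Because $\epsilon_N \leq 1/(2N) \leq 1/2$, the denominator $1-\epsilon_N$ is bounded below by $1/2$, so $2B\epsilon_N/(1-\epsilon_N) \leq 2B/N$. Similarly, using $-\log(1-x) \leq x/(1-x)$ on $[0,1/2]$ gives $-\log(1-\epsilon_N) \leq 1/N$, whence $-(1/N)\log(1-\epsilon_N) \leq 1/N^2$. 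A uniform choice such as $K_1(i) := 2B+1$ (independent of $i$ as well as $N$) therefore absorbs both terms.

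For the second inequality, I would invoke Lemma \ref{kldbound}, which upgrades $J_N^g(i)$ to $D^*(i)$ at the cost of the correction $-\frac{1}{N}\sum_{j \in \tilde{\mathcal{H}}_i}\beta^{i*}(j)\log\tilde{\rho}_1(j)$. Since $\tilde{\rho}_1(j) = \rho_1(j)/(1-\rho_1(i)) \in (0,1]$ by construction, each $-\log\tilde{\rho}_1(j)$ is nonnegative, so this correction is itself of the form (nonnegative constant)/$N$. Setting
\begin{equation*}
K'_1(i) := K_1(i) - \sum_{j \in \tilde{\mathcal{H}}_i}\beta^{i*}(j)\log\tilde{\rho}_1(j)
\end{equation*}
yields a positive constant, independent of $N$, automatically satisfying $K'_1(i) \geq K_1(i)$, and the chain is complete.

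I do not expect a genuine obstacle here; this lemma is essentially a clean restatement of Lemmas \ref{cslb} and \ref{kldbound} with error terms repackaged. The only item worth flagging is verifying the sign of the bookkeeping constant in the second step, i.e.\ checking $\tilde{\rho}_1(j) \leq 1$, which holds because $\rho_1(j) \leq \sum_{k \neq i}\rho_1(k) = 1-\rho_1(i)$ (assuming $\rho_1(i) < 1$, which is implicit whenever the type-$i$ error constraint is nontrivial).
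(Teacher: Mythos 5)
Your proposal is correct and follows exactly the paper's route: the paper's own proof is the one-line remark that the lemma "follows directly from Lemmas \ref{kldbound} and \ref{cslb}, and the fact that $\epsilon_N \leq 1/2N$," and you have simply made the constant-tracking explicit (correctly). No further comment needed.
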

\begin{proof}
This follows directly from Lemmas \ref{kldbound} and \ref{cslb}, and the fact that $\epsilon_N \leq 1/2N$.
\end{proof}

\begin{lemma}\label{csub2}
There exists an integer $N_i$ such that for every $N \geq N_i$
\begin{align}
    -\frac{1}{N}\log\phi^*_N(i)> D^*(i) - 2B\sqrt{\frac{1}{N}\log\frac{M}{\epsilon_N}}.
\end{align}
\end{lemma}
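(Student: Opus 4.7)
The plan is to exhibit an explicit pair of strategies $(f,g)$ feasible for Problem (P2) that asymptotically achieves the desired bound, and then invoke Lemmas \ref{thresh} and \ref{cortriv} together with a Hoeffding concentration argument.

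\textbf{Construction.} I take the experiment selection strategy $g$ to be i.i.d.\ sampling from the max-min distribution $\vct{\alpha}^{i*}$ (so $g_n$ is a constant distribution independent of $\rv{I}_n$). For the inference strategy, I use a threshold rule: set
\[
\theta_N := N D^*(i) - B\sqrt{2N\log(M/\epsilon_N)},
\]
and declare $\hat{\rv{H}}_N = i$ if and only if $\mathcal{C}_i(\vct{\rho}_{N+1}) - \mathcal{C}_i(\vct{\rho}_1) \geq \theta_N$; otherwise declare $\varnothing$. Since the rule never outputs any hypothesis other than $i$, the error event $\{\hat{\rv{H}}_N = i,\rv{H}\neq i\}$ is the only way to incur $\phi_N(i)$.

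\textbf{Feasibility (type-$i$ error).} By Corollary \ref{cortriv}, if $S_N^j := \sum_{n=1}^N \lambda_j^i(\rv{U}_n,\rv{Y}_n) \geq \theta_N$ for every $j \neq i$, then $\mathcal{C}_i(\vct{\rho}_{N+1}) - \mathcal{C}_i(\vct{\rho}_1) \geq \theta_N$, so the agent declares $i$. Hence
\[
\psi_N(i) \leq \Py_i^g\!\left[\exists\, j\neq i:\ S_N^j < \theta_N\right] \leq \sum_{j\neq i}\Py_i^g[S_N^j < \theta_N].
\]
Under $g$ and conditioned on $\rv{H}=i$, the summands $\lambda_j^i(\rv{U}_n,\rv{Y}_n)$ are i.i.d., bounded in $(-B,B)$ by Assumption \ref{boundedassump}, with mean $\sum_u \alpha^{i*}(u) D(p_i^u\|p_j^u) \geq D^*(i)$ by the definition of $\vct{\alpha}^{i*}$. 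Hoeffding's inequality therefore gives
\[
\Py_i^g[S_N^j < \theta_N] \leq \exp\!\left(-\frac{(ND^*(i)-\theta_N)^2}{2NB^2}\right) = \exp(-\log(M/\epsilon_N)) = \epsilon_N/M,
\]
by our choice of $\theta_N$. Summing over $j\neq i$ yields $\psi_N(i) \leq (M-1)\epsilon_N/M \leq \epsilon_N$, so the pair $(f,g)$ is feasible for (P2).

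\textbf{Bounding $\phi_N(i)$.} The inference rule is of the form required by Lemma \ref{thresh} with threshold $\theta_N$, so
\[
\phi_N(i) \leq e^{-\theta_N} = \exp\!\left(-ND^*(i) + B\sqrt{2N\log(M/\epsilon_N)}\right).
\]
Taking $-\tfrac{1}{N}\log$ and using $B\sqrt{2} < 2B$ gives
\[
-\tfrac{1}{N}\log \phi_N^*(i) \geq -\tfrac{1}{N}\log \phi_N(i) \geq D^*(i) - B\sqrt{\tfrac{2}{N}\log(M/\epsilon_N)} > D^*(i) - 2B\sqrt{\tfrac{1}{N}\log(M/\epsilon_N)},
\]
valid once $N$ is large enough that $\theta_N > 0$ (so that the construction is well-defined) and the strict inequality $B\sqrt{2}<2B$ is not swallowed by edge effects; this defines $N_i$.

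\textbf{Main obstacle.} There is no deep difficulty; the only delicate step is choosing the threshold $\theta_N$ so that the Hoeffding bound, after a union bound over the $M-1$ alternative hypotheses, is exactly tight enough to absorb the factor $\log(M/\epsilon_N)$ under the square root. The calibration $\theta_N = ND^*(i) - B\sqrt{2N\log(M/\epsilon_N)}$ is picked to make the per-hypothesis tail probability equal $\epsilon_N/M$, which is the coarsest choice that still satisfies the constraint $\psi_N(i)\leq\epsilon_N$. Everything else is a direct application of earlier lemmas together with the i.i.d. structure induced by the non-adaptive strategy $g$.
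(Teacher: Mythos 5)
Your proposal is correct and follows essentially the same route as the paper's proof: i.i.d.\ sampling from $\vct{\alpha}^{i*}$, a threshold inference rule analyzed via Lemma \ref{thresh}, and feasibility via Corollary \ref{cortriv} together with Hoeffding's inequality and a union bound over $j \neq i$. The only difference is your slightly tighter threshold calibration ($B\sqrt{2}$ in place of the paper's $2B$), which conveniently yields the strict inequality in the lemma statement.
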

\begin{proof}
We prove this by constructing an experiment selection strategy and an inference strategy that achieve the rate and constraints. Let the agent select experiments randomly and independently from the distribution $\vct{\alpha}^{i*}$. Under strategy, we have for every $j \neq i$
\begin{align}
    \E_i[\lambda_j^i(\rv{U}_n,\rv{Y}_n)] \geq D^*(i).
\end{align}
Using Hoeffding's inequality, we have
\begin{align}
    &\Py_i[ \sum_{n=1}^N\lambda_j^i(\rv{U}_n,\rv{Y}_n) <  ND^*(i) - 2B\sqrt{N\log\frac{M}{\epsilon_N}}]\\
    &\leq \Py_i[ \sum_{n=1}^N\lambda_j^i(\rv{U}_n,\rv{Y}_n) <  N\E_i[\lambda_j^i(\rv{U}_n,\rv{Y}_n)] - 2B\sqrt{N\log\frac{M}{\epsilon_N}}]\tag{Since $\E_i[\lambda_j^i(\rv{U}_n,\rv{Y}_n)] \geq D^*(i)$}\\
    &\leq \exp(-\frac{2\times 4NB^2\log\frac{M}{\epsilon_N}}{4NB^2})\\
    &\leq \frac{\epsilon_N}{M}.\label{hoeff}
\end{align}
Let the inference policy be as follows. If $\mathcal{C}_i(\vct{\rho}_{N+1}) \geq ND^*(i) - 2B\sqrt{N\log\frac{M}{\epsilon_N}} + \mathcal{C}_i(\vct{\rho}_{1})$, decide hypothesis $i$. Otherwise, declare $\varnothing$. From Lemma \ref{thresh}, we have
\begin{align}
    \Py^{f,g}[\hat{\rv{H}}_{N+1} = i \text{ and } \rv{H} \neq i ] \leq \rho_1(i)e^{-(ND^*(i) - 2B\sqrt{N\log\frac{M}{\epsilon_N}} + \mathcal{C}_i(\vct{\rho}_{1}))}.
\end{align}
And thus,
\begin{align}
    -\frac{1}{N}\log\Py^{f,g}[\hat{\rv{H}}_{N+1} = i \mid \rv{H} \neq i ] \geq D^*(i) - 2B\sqrt{\frac{1}{N}\log\frac{M}{\epsilon_N}}.
\end{align}
Now we need to show that $\Py^{f,g}[\hat{\rv{H}}_{N+1} = i \mid \rv{H} = i ] \geq 1- \epsilon_N$. Notice that under the inference strategy, $\hat{\rv{H}}_{N+1} \neq i$ if and only if $\mathcal{C}_i(\vct{\rho}_{N+1}) < ND^*(i) - 2B\sqrt{N\log\frac{M}{\epsilon_N}} + \mathcal{C}_i(\vct{\rho}_{1})$. Therefore $\hat{\rv{H}}_{N+1} \neq i$, by Corollary \ref{cortriv}, implies that for some $j \neq i$
\begin{align}
     \sum_{n=1}^N\lambda_j^i(\rv{U}_n,\rv{Y}_n) <  ND^*(i) - 2B\sqrt{N\log\frac{M}{\epsilon_N}}.
\end{align}
Using the inequality established in (\ref{hoeff}) and a union bound, the probability of this event conditioned on hypothesis $i$ is at most $\epsilon_N$. Therefore, $\Py^{f,g}[\hat{\rv{H}}_{N+1} \neq i \mid \rv{H} = i ] \leq \epsilon_N$.
\end{proof}
\noindent Using Lemmas \ref{cslb2} and \ref{csub2} we can therefore conclude that
\begin{align}
    \lim_{N \to \infty}-\frac{1}{N}\log\phi^*_N(i) =  D^*(i).
\end{align}

\subsection{Proof of Theorem \ref{lbthm}}
In problem (\ref{opt1}), the strategies $f,g$ are required to satisfy the constraints $\psi_N(i) \leq  \epsilon_N$ for every $i \in \mathcal{H}$. For any pair $f,g$ of strategies that do satisfy all these constraints, we have the following because of Lemma \ref{cslb2}.
\begin{align}
    \gamma_N  &= \sum_{i \in \mathcal{H}}(1-\rho_1(i))\phi_N(i)\\
    &\geq \sum_{i \in \mathcal{H}}(1-\rho_1(i))\exp(-NJ^g_N(i) -  K_1(i))\\
    &\geq \sum_{i \in \mathcal{H}}(1-\rho_1(i))\exp(-ND^*(i) -  K'_1(i)).
\end{align}
With $K_1$ defined as $K_1 := \max_{i \in \mathcal{H}}K'_1(i) \geq \max_{i \in \mathcal{H}}K_1(i)$, this proves Theorem \ref{lbthm}.
\begin{remark}\label{remark1}
We can obtain a tighter lower bound on $\gamma_N$ using Lemma \ref{cslb} without any restrictions on $\epsilon_N$, that is
\begin{align}\label{tightlb}
\gamma_N &\geq \sum_{i \in \mathcal{H}}(1-\rho_1(i))e^{\left(-NJ_{N}^g(i) - N\frac{2B\epsilon_N }{1-\epsilon_N}+\log(1-\epsilon_N)\right)}.
\end{align}
\end{remark}

\subsection{Proof of Theorem \ref{achieve}}\label{proof2}
To prove Theorem \ref{achieve}, we construct appropriate experiment selection and inference strategies. We then show that these strategies achieve the desired bound on misclassification probability while satisfying the constraints in problem (\ref{opt1}). The construction is almost identical to the strategies in \cite{chernoff1959sequential}.

\subsubsection{Experiment selection strategy}Let the \emph{maximum a posteriori} (MAP) estimate at time $n$ be \begin{equation}\bar{i}_n := \argmax_{i \in \mathcal{H}}\rho_n(i).\end{equation} If $\bar{i}_n = i$, then an experiment is selected randomly with distribution $\vct{\alpha}^{i*}$. We denote this experiment selection strategy by $\bar{g}$.
\subsubsection{Inference strategy} Consider the strategy $\bar{f}$ where
$$
\bar{f}(\vct{\rho}_{N+1}) = 
\begin{cases}
i &\text{if } \mathcal{C}_i(\vct{\rho}_{N+1}) - \mathcal{C}_i(\vct{\rho}_{1}) \geq  ND^*(i) - N\delta\\
& \text{for some } i \in \mathcal{H},\\
\varnothing &\text{otherwise}.
\end{cases}
$$
Let us assume that $\delta < D^*(i)$ for every $i \in \mathcal{H}$ without loss of generality. This ensures that for large enough $N$, the threshold condition above can be satisfied by at most one hypothesis.

Using Lemma \ref{thresh}, we can conclude that for each hypothesis $i \in \mathcal{H}$, $\phi_N(i) \leq e^{-(ND^*(i) - N\delta)}$. Therefore, under these strategies, we have
\begin{align}
    \gamma_N \leq \sum_{i \in \mathcal{H}}(1-\rho_1(i))\exp(-N(D^*(i) -\delta)).
\end{align}
If we can show that the strategies also satisfy the type-$i$ error constraints in problem (\ref{opt1}), then clearly, $\gamma_N^* \leq \gamma_N$. To prove that these strategies do satisfy the constraints for large values of $N$, we use the arguments in \cite{chernoff1959sequential} and the details of this proof are as follows.

\begin{proof}
We now need to verify if the proposed strategy achieves the type-$i$ error constraints, that is $\psi_N(i) \leq \epsilon_N$ for each hypothesis $i \in \mathcal{H}$. Let us examine the evolution of the log likelihood ratio $\lambda_j^i$ associated with a pair of hypotheses $i$ and $j$ under the hypothesis $i$. Consider the following Doob decomposition
\begin{align}
    \sum_{n = 1}^N\lambda^i_j(\rv{U}_n,\rv{Y}_n) - ND^*(i) &=  \sum_{n = 1}^N\left[\lambda^i_j(\rv{U}_n,\rv{Y}_n) - \E_i[\lambda^i_j(\rv{U}_n,\rv{Y}_n) \mid \rv{I}_n] \right]
    + \sum_{n = 1}^N\left[\E_i[\lambda^i_j(\rv{U}_n,\rv{Y}_n) \mid \rv{I}_n] - D^*(i)\right]\\
    &=:  \sum_{n = 1}^N\rv{X}_n +  \sum_{n = 1}^N\rv{Z}_n.
\end{align}
Note that $\rv{X}_n$ is a martingale difference sequence with respect to the filtration $\rv{I}_n$ and $|\rv{X}_n| < 2B$ with probability 1. Using Azuma's inequality \cite{azuma1967weighted}, we have
\begin{align}
    \label{azuma1}\Py_i[\sum_{n = 1}^N\rv{X}_n < -K_3\sqrt{N\log\frac{2M}{\epsilon_N}}] \leq \exp\left(\frac{-K_3^2N\log\frac{2M}{\epsilon_N}}{8NB^2}\right).
\end{align}
We can choose $K_3 > 0$ such that
\begin{align}
    \exp\left(\frac{-K_3^2N\log\frac{2M}{\epsilon_N}}{8NB^2}\right) \leq \frac{\epsilon_N}{2M}.
\end{align}
Let $\rv{T}$ be the smallest time index such that $\bar{i}_n = i$ for every $n \geq \rv{T}$. Notice that $\rv{T}$ is a random variable. Under Assumption \ref{steadinf}, it was shown in \cite{chernoff1959sequential} (Lemma 1) that there exist constants $b,K > 0$ such that for every $i\in \mathcal{H}$, we have $\Py_i[\rv{T} > n] \leq Ke^{-bn}$.
Notice that
\begin{align}
    |\sum_{n = 1}^N\rv{Z}_n| < 2B\rv{T}.
\end{align}
Therefore, if
\begin{align}
    \sum_{n = 1}^N\rv{Z}_n &< -K_2\log\frac{2MK}{\epsilon_N}\\
    \implies \rv{T} &> \frac{K_2\log\frac{2MK}{\epsilon_N}}{2B}.
\end{align}
Therefore,
\begin{align}
    \label{azuma2}\Py_i[ \sum_{n = 1}^N\rv{Z}_n &< -K_2\log\frac{2MK}{\epsilon_N}] \leq Ke^{-\frac{bK_2\log\frac{2MK}{\epsilon_N}}{2B}}
\end{align}
We can pick a $K_2$ such that for large enough $N$
\begin{align}
    Ke^{-\frac{bK_2\log\frac{2MK}{\epsilon_N}}{2B}} \leq \frac{\epsilon_N}{2M}.
\end{align}
Using inequalities (\ref{azuma1}) and (\ref{azuma2}), we can conclude that
\begin{align}
    \Py_i[\bigcup_{j \neq i}\{\sum_{n = 1}^N\lambda^i_j(\rv{U}_n,\rv{Y}_n) < ND^*(i) -K_3\sqrt{N\log\frac{2M}{\epsilon_N}} - K_2 \log\frac{2MK}{\epsilon_N} \}] \leq \epsilon_N.
\end{align}
Because of Assumption \ref{epsassum}, for large enough $N$, we have
\begin{align}
ND^*(i) - K_3\sqrt{N\log\frac{2M}{\epsilon_N}}  - K_2 \log\frac{2MK}{\epsilon_N} > ND^*(i) - N\delta.
\end{align}
Using this fact and Corollary \ref{cortriv}, we have
\begin{align}
&\Py_i[\hat{\rv{H}}_{N+1} \neq i \mid \rv{H} = i]\\
&\leq \Py_i[\bigcup_{j \neq i}\{\sum_{n = 1}^N\lambda^i_j(\rv{U}_n,\rv{Y}_n) < ND^*(i) -N\delta \}]\\
& \leq \Py_i[\bigcup_{j \neq i}\{\sum_{n = 1}^N\lambda^i_j(\rv{U}_n,\rv{Y}_n) < ND^*(i) -K_3\sqrt{N\log\frac{2M}{\epsilon_N}} - K_2 \log\frac{2MK}{\epsilon_N} \}] \leq \epsilon_N.
\end{align}
Therefore, we can conclude that $\psi_N(i) \leq \epsilon_N$ for every $i \in \mathcal{H}$.
\end{proof}

\section{Discussion on Strategy Design}\label{discussion}
In Section \ref{proof2}, we described an experiment selection strategy $\bar{g}$. As discussed earlier, the agent starts with a prior belief on the set of hypotheses and as it performs experiments, its confidence on the true hypothesis improves. We refer to this initial phase of experimentation as the \emph{exploration phase}. Soon enough, the MAP estimate $\bar{i}_n = \rv{H}$. This implies that the agent starts selecting experiments using the distribution $\vct{\alpha}^{\rv{H}*}$ which rapidly improves its confidence on $\rv{H}$. We refer to this subsequent phase of experimentation as the \emph{verification phase}. 

According to Lemma 1 in \cite{chernoff1959sequential}, the exploration phase terminates in $\mathcal{O}(\log N)$ time with high probability under Assumption \ref{steadinf}. We can relax Assumption $\ref{steadinf}$ using the technique in \cite{nitinawarat2013controlled} and show that the exploration phase terminates in sublinear time with high probability. Therefore, in the asymptotic analysis, the impact of exploration on the overall performance is negligible. However, in the non-asymptotic regime, the exploration performance may have a significant impact on the overall performance, especially in problems like dynamic search over trees. This issue was discussed in \cite{naghshvar2013active} and \cite{wang2017active} in a stopping time setting and heuristic strategies were proposed to improve the exploration performance. One such heuristic is based on Extrinsic Jensen-Shannon (EJS) divergence \cite{naghshvar2012extrinsic}. Using our notation, the EJS divergence associated with an experiment $u$ and posterior belief $\vct{\rho}_n$ is the expected increment in confidence level on $\rv{H}$, that is
\begin{align}
EJS(\vct{\rho}_n,u) = \E[\mathcal{C}_{\rv{H}}(\vct{\rho}_{n+1}) - \mathcal{C}_{\rv{H}}(\vct{\rho}_n) \mid \vct{\rho}_n,\rv{U}_n = u].
\end{align}
The heuristic strategy in \cite{naghshvar2012extrinsic} is to greedily select the experiment that maximizes $EJS(\vct{\rho}_n,u)$ at time $n$. 

We described a lower bound on the error probability $\gamma_N$ in Remark \ref{remark1}. Using Jensen's inequality, we can further weaken the bound (\ref{tightlb}) to obtain the following lower bound
\begin{align*}
&\frac{1}{N}\log \frac{1}{\gamma_N}
\leq \frac{1}{N}\E^g[\mathcal{C}_{\rv{H}}(\vct{\rho}_{N+1})]  + \frac{2B\epsilon_N }{1-\epsilon_N}-\frac{1}{N}\log(1-\epsilon_N). 
\end{align*}
Therefore, as a heuristic, one can use the expected confidence level $\E^g[\mathcal{C}_{\rv{H}}(\vct{\rho}_{N+1})]$ as a proxy for $-\log\gamma_N$ and try to maximize the confidence level instead of the error rate. This is equivalent to maximizing $J_N^g$ where
\begin{align}
J_N^g := \E J_N^g(\rv{H}) = \frac{1}{N} \E^{g} \left[\mathcal{C}_{\rv{H}}(\vct{\rho}_{N+1})- \mathcal{C}_{\rv{H}}(\vct{\rho}_1) \right].
\end{align}
It was shown in \cite{kartik2018policy} that maximizing $J_N^g$ can be formulated as a Partially Observable Markov Decision Problem (POMDP). Using heuristics for solving POMDPs, we can approximately optimize $J_N^g$ and one such heuristic was presented in \cite{kartik2018policy}. The strategy based on EJS divergence \cite{naghshvar2012extrinsic} happens to be a one-step greedy policy with respect to this POMDP.

\section{Conclusions}\label{conc}
We formulated a fixed horizon active hypothesis testing problem in which the agent can decide on one of the hypotheses or declare its experiments inconclusive. For analyzing this problem, we formulated a sub-problem which is a generalization of Chernoff-Stein lemma \cite{cover2012elements} to a setting with multiple hypotheses and multiple experiments. We obtained lower bounds on optimal error probability in the sub-problem and used them to obtain lower bounds on misclassification probability in our original problem. We also derived upper bounds by constructing appropriate strategies and analyzing their performance. We defined a quantity called expected confidence rate and based on it, we proposed a heuristic approach for strategy design.

\section*{Acknowledgments}
This research was supported, in part, by National Science Foundation under Grant NSF CNS-1213128, CCF-1410009, CPS-1446901, Grant ONR N00014-15-1-2550, and Grant AFOSR FA9550-12-1-0215.



\bibliographystyle{IEEEtran}
\bibliography{refs1}

\end{document}